\newtheorem{theorem}{Theorem}
\newtheorem{proposition}[theorem]{Proposition}
  \g@addto@macro\@uclclist{%
    \alpha\Alpha%
    \beta\Beta%
    \gamma\Gamma%
    \delta\Delta%
    \epsilon\Epsilon%
    \zeta\Zeta%
    \eta\Eta%
    \theta\Theta%
    \iota\Iota%
    \kappa\Kappa%
    \lambda\Lambda%
    \nu\Nu%
    \xi\Xi%
    \pi\Pi%
    \rho\Rho%
    \sigma\Sigma%
    \tau\Tau%
    \upsilon\Upsilon%
    \phi\Phi%
    \chi\Chi%
    \psi\Psi%
    \omega\Omega%
  }
  \newcommand\Alpha{\mathrm{A}}
  \newcommand\Beta{\mathrm{B}}
  \newcommand\Epsilon{\mathrm{E}}
  \newcommand\Zeta{\mathrm{Z}}
  \newcommand\Eta{\mathrm{H}}
  \newcommand\Iota{\mathrm{I}}
  \newcommand\Kappa{\mathrm{K}}
  \newcommand\Nu{\mathrm{N}}
  \newcommand\Rho{\mathrm{P}}
  \newcommand\Tau{\mathrm{T}}
  \newcommand\Chi{\mathrm{X}}
\newcommand{\ecipci}{ECI-PCI algorithm}
\newcommand{\mean}[1]{\langle #1 \rangle} 
\newcommand{\sd}[1]{\mathrm{sd}(#1)} 
\newcommand{\mtx}[1]{\boldsymbol{\mathbf{\MakeUppercase{#1}}}} 
\newcommand{\vect}[1]{\boldsymbol{\mathbf{\MakeLowercase{#1}}}} 
\newcommand{\grp}[1]{\mathcal{\MakeUppercase{#1}}} 
\newcommand{\elij}[2]{\MakeUppercase{#1}_{#2}} 
\newcommand{\eli}[2]{\MakeLowercase{#1}_{#2}} 
\newcommand{\nth}[2]{\vect{#1}_{#2}} 
\newcommand{\prob}{\mathbb{P}} 
\newcommand{\lpls}{l} 
\newcommand{\specs}{m} 
\newcommand{\spec}{\mtx{\specs}} 
\newcommand{\divtys}{d} 
\newcommand{\ubitys}{u} 
\newcommand{\divty}{\mtx{\divtys}} 
\newcommand{\ubity}{\mtx{\ubitys}} 
\newcommand{\clyr}[1]{\widetilde{#1}} 
\newcommand{\plyr}[1]{\widehat{#1}} 
\newcommand{\cprojs}{\clyr{\specs}} 
\newcommand{\pprojs}{\plyr{\specs}} 
\newcommand{\cproj}{\mtx{\cprojs}} 
\newcommand{\pproj}{\mtx{\pprojs}} 
\newcommand{\sims}{s} 
\newcommand{\csims}{\clyr{\sims}} 
\newcommand{\psims}{\plyr{\sims}} 
\newcommand{\csim}{\mtx{\csims}} 
\newcommand{\psim}{\mtx{\psims}} 
\newcommand{\rw}[1]{\mtx{#1}_\mathrm{rw}} 
\newcommand{\sym}[1]{\mtx{#1}_\mathrm{sym}} 
\newcommand{\clpl}{\clyr{\lpls}} 
\newcommand{\ceigs}{d} 
\newcommand{\peigs}{u} 
\newcommand{\ceig}{\vect{\ceigs}} 
\newcommand{\peig}{\vect{\peigs}} 
\newcommand{\eigvals}{\lambda} 
\newcommand{\csvs}{x} 
\newcommand{\psvs}{y} 
\newcommand{\svals}{\sigma} 
\newcommand{\mcsv}{\mtx{\csvs}} 
\newcommand{\mpsv}{\mtx{\psvs}} 
\newcommand{\msval}{\mtx{\svals}} 
\newcommand{\csv}{\vect{\csvs}} 
\newcommand{\psv}{\vect{\psvs}} 
\newcommand{\clusts}{z} 
\newcommand{\cclusts}{\clyr{\clusts}} 
\newcommand{\pclusts}{\plyr{\clusts}} 
\newcommand{\clust}{\vect{\clusts}} 
\newcommand{\cclust}{\vect{\cclusts}} 
\newcommand{\pclust}{\vect{\pclusts}} 
\newcommand{\ecis}{\mathrm{ECI}} 
\newcommand{\pcis}{\mathrm{PCI}} 
\DeclareRobustCommand{\VAN}[2]{#2}
\title{Reinterpreting Economic Complexity: A co-clustering approach}
\author{
  \href{https://orcid.org/0000-0002-2878-9360}{\includegraphics[scale=0.06]{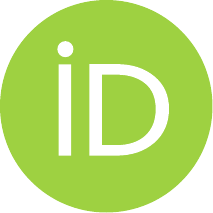}\hspace{1mm}Carlo Bottai} \\
        Department of Economics, Management and Statistics\\
        University of Milano–Bicocca\\
        Milano, Italy\\
        \texttt{carlo.bottai@unimib.it} \\
  \And
  \href{https://orcid.org/0000-0001-9268-446X}{\includegraphics[scale=0.06]{orcid.pdf}\hspace{1mm}Jacopo Di~Iorio} \\
	Department of Statistics\\
        Penn State University\\
        University Park (PA), USA\\
        \texttt{jqd5830@psu.edu} \\
  \AND
  \href{https://orcid.org/0000-0001-6825-2165}{\includegraphics[scale=0.06]{orcid.pdf}\hspace{1mm}Martina Iori} \\
	Institute of Economics \& L'EMbeDS\\
        Sant'Anna School of Advanced Studies\\
        Pisa, Italy\\
        \texttt{martina.iori@santannapisa.it} \\
}
\begin{document}
\maketitle

\begin{abstract}
  Economic growth results from countries' accumulation of organizational and technological capabilities. The Economic and Product Complexity Indices, introduced as an attempt to measure these capabilities from a country's basket of exported products, have become popular to study economic development, the geography of innovation, and industrial policies. Despite this reception, the interpretation of these indicators proved difficult. Although the original Method of Reflections suggested a direct interconnection between country and product metrics, it has been proved that the Economic and Product Complexity Indices result from a spectral clustering algorithm that separately groups similar countries or similar products, respectively. This recent approach to economic and product complexity conflicts with the original one and treats separately countries and products. However, building on previous interpretations of the indices and the recent evolution in spectral clustering, we show that these indices simultaneously identify two co-clusters of similar countries and products. This viewpoint reconciles the spectral clustering interpretation of the indices with the original Method of Reflections interpretation. By proving the often neglected intimate relationship between country and product complexity, this approach emphasizes the role of a selected set of products in determining economic development while extending the range of applications of these indicators in economics.
\end{abstract}

\keywords{Economic Development \and Economic Completely \and  Capabilities \and Spectral Co-clustering}

\section*{Introduction}
In the process of economic development, countries accumulate organizational and technological capabilities that allow them to diversify into and competitively export new goods and services of increasing technological sophistication.\cite{FagerbergEtAlInnovation2010, HausmannEconomic2016, FrenkenEtAlCapabilities2023} Therefore, export baskets including technologically advanced products usually reflect the accumulation of rare and specialized capabilities by the exporters. Based on this intuition, a series of indicators have been proposed to infer a country's development stage from trade data.\cite{LallTechnological1992, HausmannEtAlWhat2007} In 2009, Hidalgo and Hausmann\cite{HidalgoHausmannBuilding2009, HausmannEtAlAtlas2014} introduced a now widespread indicator that examines the structural properties of bipartite graphs representing the monetary values of products annually exported by countries as a proxy for the unobserved capabilities of those countries: i.e., a tripartite network where the middle layer is unobservable (\emph{latent}); see Figures~\ref{fig:tri_bi_graphs}a and \ref{fig:tri_bi_graphs}b. This methodology extracts information from the exports' bipartite graph by summarizing the productive activities carried out in each country into the Economic Complexity Index (ECI) and the capabilities necessary to produce and competitively export each product into the Product Complexity Index (PCI). While the two indicators are firmly bounded and simultaneously defined, the literature mostly neglected the PCI, with notable exceptions.\cite{FelipeEtAlProduct2012, MealyTeytelboymEconomic2022} On the contrary, scholars interested in countries' development found in the ECI a powerful tool for capturing and forecasting living standard differences across countries and regions, as measured by their GDP per capita.\cite{HidalgoEconomic2021, BallandEtAlNew2022}

\begin{figure}[bp]
  \centering
  \includegraphics[width=0.9\textwidth]{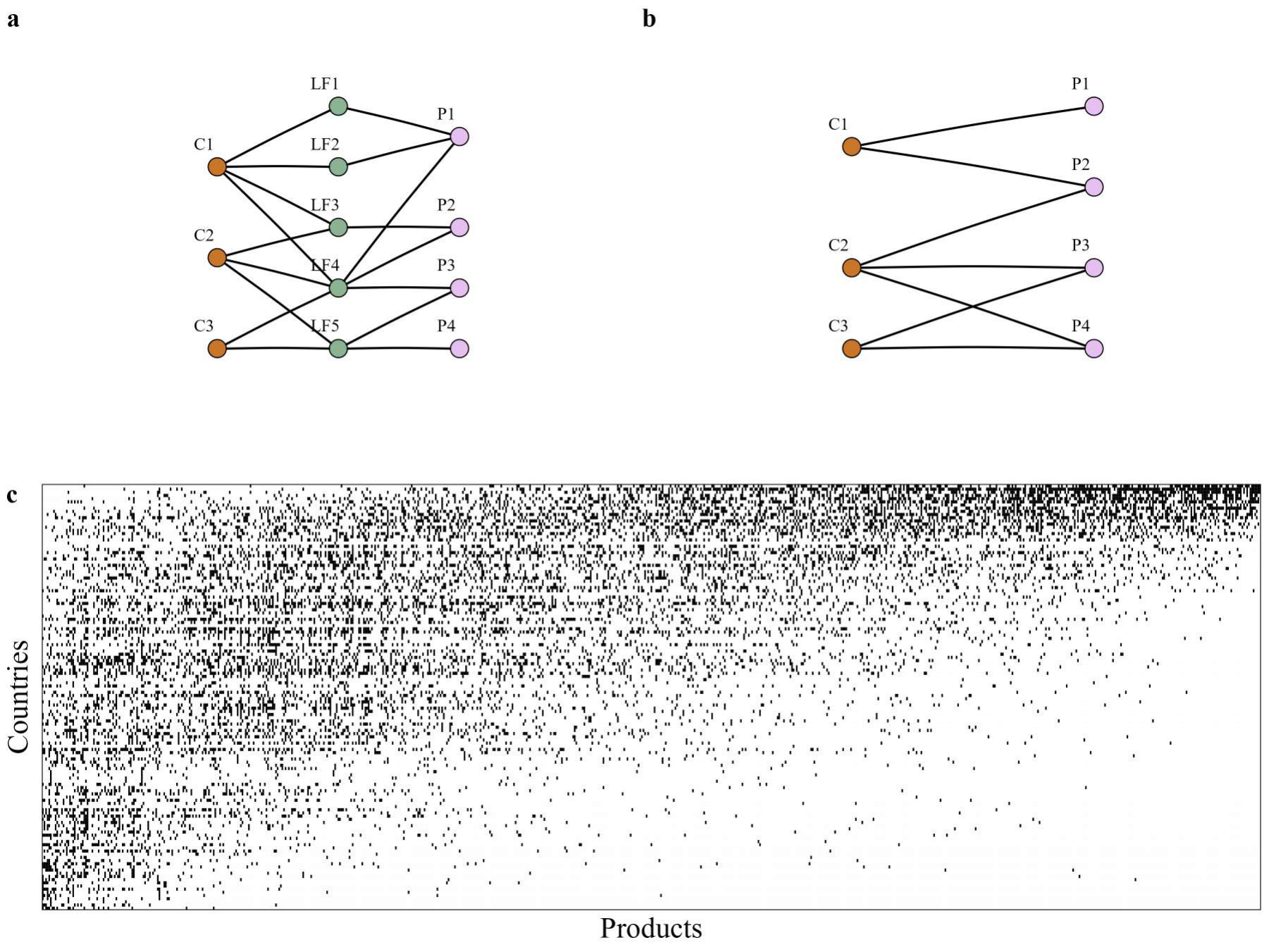}
  \caption{International trade export graphs and specialization matrix. International trade export graphs show the connections between countries and the products they competitively produce. Panel (a) reports an example of a tripartite trade graph with a hidden layer between countries ($C$) and products ($P$); i.e. a latent factor like countries' capabilities. The hidden layer identifies the latent factors $LF$ owned by a specific country $C$ that are necessary to competitively export a certain product $P$. This tripartite graph originates the observed bipartite graph of international trade exports reported in panel (b). This latter includes only country ($C$) and product ($P$) layers and directly shows the link between countries and exported products. In this example, $C1$ is the only country competitively exporting product $P1$ because it owns the latent factors $LF1$ and $LF2$ necessary to produce $P1$, as reported in panel (a). Panel (c) shows the bi-adjacency matrix derived from the international trade export graph, also known as the specialization matrix ($\mathbf{M}$), obtained using international trade data in 1990 (more details in Methods). This matrix assumes value 1 (represented by a dark square) when a given country (row) exports competitively a certain product (column) and 0 otherwise.}
  \label{fig:tri_bi_graphs}
\end{figure}

As ECI's applications spread, a debate on its interpretation and definition has emerged. On the one hand, a stream of literature has focused on the refinement of the index or its extension to other settings.\cite{TacchellaEtAlNew2012, CristelliEtAlMeasuring2013, AlbeaikEtAlImproving2017, GneccoEtAlMachine2022, LiuGaoNormality2022, InouaSimple2023, SciarraEtAlReconciling2020} On the other, increasing efforts have been devoted to interpreting the algorithm resulting in the ECI and PCI.\cite{CaldarelliEtAlNetwork2012, InouaSimple2023, Gomez-LievanoMethods2018, MealyEtAlInterpreting2019, vanDamEtAlCorrespondence2021, HidalgoEconomic2021, TezaEtAlEntropic2021, SchetterMeasure2022, McNerneyEtAlBridging2023, ServedioEtAlEconomic2024} 
The original interpretation, also known as Method of Reflections (MoR),\cite{HidalgoHausmannBuilding2009} conceives the ECI as a measure of \emph{generalized capability diversity}. In this framework, the more a country has diversified into technologically advanced products, whose production requires a large variety of rare capabilities, the more complex (i.e., with higher ECI) it is considered. These technologically advanced products are those considered more complex (i.e., with higher PCI) being produced solely by countries endowed with capabilities necessary for advanced technologies and organizational routines.
However, recent contributions reject ECI as a measure of capability diversity.\cite{InouaSimple2023} On the contrary, it has been shown that ECI (PCI) is equivalent to a dimensionality reduction algorithm that groups countries (products) with similar export baskets (exporter sets). Mealy and coauthors\cite{MealyEtAlInterpreting2019} proved that ECI and PCI result from the application to the bipartite export graph of various statistical methodologies, such as \emph{spectral clustering},\cite{ShiMalikNormalized2000, DingTutorial2004, vonLuxburgTutorial2007} \emph{correspondence analysis},\cite{BenzecriAnalyse1973, HillReciprocal1973, HillCorrespondence1974, GreenacreTheory1984, GreenacreCorrespondence2017} and \emph{diffusion maps}.\cite{NadlerEtAlDiffusion2005}
Considering the nature of these techniques, the relative position in the ECI (PCI) ranking of any two countries (products) can be seen as measuring their similarity rather than their relative complexity. 
In line with this view, van Dam and coauthors,\cite{vanDamEtAlCorrespondence2021} emphasized that ECI and GDP per capita correlate because ``countries with similar export baskets also have similar wealth''.
Similarly, Gomez-Lievano\cite{Gomez-LievanoMethods2018} represents the two indices under the lens of a \emph{recommender system} that embeds the principles of ``collective imitation'' between countries (ECI) and ``relatedness'' between products (PCI). According to these principles, a country $c$ is more likely to start producing a product $p$ the more $c$ is similar to the country $c'$ already producing $p$. At the same time, $c$ is more likely to introduce $p$ when $p$ is close to the product $p'$ already produced by $c$. Over time, the resulting consensus dynamics processes align export baskets and exporter sets across countries and products.
Building on the principle of relatedness too, McNerney and coauthors\cite{McNerneyEtAlBridging2023} interpreted ECI as measuring structural change, i.e. describing baskets of economic activities in a low-dimension space emerging from economic diversification dynamics driven by countries' current specialization baskets.
Thus, we can interpret the ECI (PCI) algorithm as a procedure that returns the best one-dimensional indicator, ranking countries (products) based on their relative similarity. The PCI comes first, determining the coordinates of the long-term directions of economic change, while ECI follows, showing where each country is located along this evolutionary path.

Undoubtedly, emerging interpretations of ECI as a measure of similarity between export baskets conflict with the original MoR interpretation. By relying on the idea of generalized capability diversity, the MoR qualifies products that must be competitively exported by a country to climb along the economic development stages, highlighting the close relationship between ECI and PCI. On the contrary, interpreting ECI (PCI) as similarity is vague about which set of products (countries) identifies two countries (products) as similar. In our view, this is due to the omission of the network latent layer, corresponding to countries' capabilities, in the similarity-measure interpretation, as already pointed out---moving from a different perspective---by van Dam and coauthors.\cite{vanDamEtAlCorrespondence2021}

In this paper, we reconcile these conflicting interpretations of ECI and PCI. Specifically, we preserve the interpretation of the algorithm as a measure of country similarity while providing an integrated view of the two indices that explicitly connects countries and products, as in the original MoR interpretation. To do so, we prove that the \ecipci{} is equivalent to a spectral co-clustering algorithm\cite{DhillonCoClustering2001} that gathers, at the same time, products and countries together. This procedure returns two mutually exclusive co-clusters employing a concept of duality between the rows and columns of the graph's bi-adjacency matrix: row clusters depend on their column distributions, and column clusters are induced by co-occurrence in rows. Since ECI and PCI extract information from a bipartite graph of international exports, co-clustering applies to this setting better than clustering, which considers each set of nodes (countries or products) separately. In this context, a co-clustering approach allows for the simultaneous partition of countries and products into two disjoint sets of similar and interrelated product-country pairs. As shown in the following sections, these disjoint sets can be interpreted as more technologically advanced product-country pairs and less advanced ones. While the similarity-measure interpretation applies as well in the co-clustering context, our approach helps clarify which products (countries) influence the clustering process among countries (products) more. As discussed later, the proximity between any two graph nodes reflects a random walker's two-step transitions on the exports' bipartite graph that, in the clustering context, are hidden in the matrix algebra.\cite{YenEtAlLink2011, MealyEtAlInterpreting2019, vanDamEtAlCorrespondence2021} Even further, in the co-clustering framework, we can hypothesize latent factors (e.g., capabilities, as in the MoR) inducing the two clustering processes simultaneously.\cite{GreenacreTheory1984, KlugerSpectral2003, FoussEtAlBipartite2016, vanDamEtAlCorrespondence2021} 

To further clarify the role of products in economic development, we also propose a generalization of the co-clustering algorithm by using a soft (or fuzzy) clustering technique that returns posterior probabilities to belong to the assigned co-cluster. This procedure allows identifying countries and products with a higher likelihood of moving upward or downward in the ECI and PCI ranking and, therefore, more interesting in terms of economic and production dynamics. It also helps to understand which products better characterize the groups of more or less advanced countries in different periods.

The proposed approach to the definition of ECI and PCI allows a better understanding of the profound interconnection between economic development and production systems. To provide an example of the possible implications, we apply our methodology to global trade data, i.e., the original Hidalgo and Hausmann's setting.

\section*{Results}

\subsection*{Deriving ECI and PCI from a co-clustering algorithm}
The \ecipci{} works on a binary contingency table $\spec_{m \times n} = [\elij{\specs}{cp}]$ that represents a bipartite graph $\grp{\specs}$ measuring whether a country $c$ produces competitively a product $p$ ($\elij{\specs}{cp} = 1$) or not ($\elij{\specs}{cp} = 0$), as Figure~\ref{fig:tri_bi_graphs} illustrates. The number of products exported competitively by a country (\emph{diversity}) is stored in the diagonal matrix $\divty = \mathrm{diag}(\sum_p \elij{\specs}{cp})$, while the number of countries that competitively produce a product (\emph{ubiquity}) is collected in the diagonal matrix $\ubity = \mathrm{diag}(\sum_c \elij{\specs}{cp})$. This algorithm is equivalent to solving the eigenvalue problem of two random-walk normalized similarity matrices, $\cproj = \divty^{-1} \, \csim = \divty^{-1} \, (\spec \, \ubity^{-1} \, \spec^{\transp})$ and $\pproj = \ubity^{-1} \, \psim = \ubity^{-1} \, (\spec^{\transp} \, \divty^{-1} \, \spec)$, and leading to ECI and PCI, respectively.\cite{CaldarelliEtAlNetwork2012, HausmannEtAlAtlas2014}

Mealy and coauthors\cite{MealyEtAlInterpreting2019} explained that the ECI-PCI algorithm is equivalent to a spectral clustering technique partitioning two similarity graphs---$\csim$ and $\psim$---each into a pair of disjoint sets---$\mathcal{A}$ and $\mathcal{B}$---that are internally similar and externally dissimilar. Following this interpretation, the algorithm divides countries (products) into two groups leveraging the similarity between countries (products) in terms of their export specialization (concentration) patterns.

The spectral clustering interpretation deals with ECI and PCI separately, without directly embedding their mathematical and economic relationship. In what follows, we propose an interpretation of the ECI-PCI algorithm based on \emph{spectral co-clustering} that makes explicit the interwoven connection between ECI and PCI by computing them simultaneously from the same singular value decomposition (SVD) procedure.

In the clustering context, each element $\elij{\cprojs}{cc'}$ of $\cproj$ represents the probability $\prob_{c \rightarrow c'}$ of a random walker to move from $c$ to $c'$ in one step, given the ubiquity of the products competitively produced by $c$ or $c'$ and each element $\elij{\pprojs}{pp'}$ of $\pproj$ the corresponding probability $\prob_{p \rightarrow p'}$.\cite{HidalgoHausmannBuilding2009, CaldarelliEtAlNetwork2012, MealyEtAlInterpreting2019} As explained in the Methods in full detail, through stochastic complementation,\cite{MeyerStochastic1989, YenEtAlLink2011} it is possible to show that these two transition probabilities represent the reduced Markov chains, $\clyr{\grp{\specs}}$ and $\plyr{\grp{\specs}}$, of the chain $\grp{\specs}$ associated with the transition matrix
\[
\rw{\xi} = 
\begin{bmatrix}
  \mtx{0} & \cproj \\
  \pproj^{\transp} & \mtx{0}
\end{bmatrix},
\]
so that the values in $\cproj$ can be read as one-step transition probabilities of $\clyr{\grp{\specs}}_{c} \rightarrow \clyr{\grp{\specs}}_{c'}$ on $\clyr{\grp{\specs}}$ or as two-step transition probabilities of $\grp{\specs}_{c} \Rightarrow \grp{\specs}_{c'}$ on $\grp{\specs}$ (i.e., on the bipartite graph in Figure~\ref{fig:tri_bi_graphs}b); and the same applies to $\pproj$.
In this framework, the ECI-PCI algorithm can be seen as the solution of the eigenvalue problem $\rw{\xi} \, \clust = (1 - \eigvals) \, \clust$ defined on the bipartite graph $\grp{\specs}$, instead of the solution of two distinct problems defined on its monopartite projections $\clyr{\grp{\specs}}$ and $\plyr{\grp{\specs}}$. Consequently, the two indices derive from a single matrix $\rw{\xi}$ and not from two distinct matrices, as suggested by previous interpretations of the algorithm.

This result highlights a duality between countries and products that can be exploited and preserved by applying a spectral co-clustering algorithm instead of a spectral clustering one. In fact, in the spectral co-clustering algorithm proposed by Dhillon,\cite{DhillonCoClustering2001} row clustering induces column clustering, and \emph{vice versa}. The algorithm can be applied to any matrix $\spec$ and returns two mutually exclusive co-clusters. The bi-partition problem is solved using the second left- and right-singular vectors $\nth{\csv}{2}$ and $\nth{\psv}{2}$ resulting from the SVD of $\sym{\specs} = \divty^{-1/2} \, \spec \, \ubity^{-1/2}$; i.e., the symmetric normalization of $\spec$. The SVD of the real matrix is a factorization of the form $\sym{\specs} = \mcsv \, \msval \, \mpsv^{\transp}$, where $\mcsv$ and $\mpsv$ are orthogonal unitary matrices, and $\msval$ is a $ m \times n$ rectangular diagonal matrix whose entries $\eli{\svals}{i} = \elij{\svals}{ii}$ are non-negative real numbers named singular values. The columns of $\mcsv$ and $\mpsv$ are called \emph{left-} and \emph{right-singular vectors} of $\sym{\specs}$ and the corresponding sets of columns $[\nth{\csv}{1}, \nth{\csv}{2}, \dots, \nth{\csv}{m}]$ and $[\nth{\psv}{1}, \nth{\psv}{2}, \dots, \nth{\psv}{n}]$ are two sets of orthonormal bases. A clustering technique, such as $k$-means, is then applied to $\nth{\clusts}{2} = [\nth{\cclusts}{2}, \nth{\pclusts}{2}] = [\divty^{-1/2} \, \nth{\csv}{2}, \ubity^{-1/2} \, \nth{\psv}{2}]$ to identify the two clusters. Considering that the clustering is performed simultaneously on the reduced representation of rows and columns, the two obtained clusters are co-clusters formed by sets of both rows and columns. It is possible to prove that the results of the spectral co-clustering algorithm correspond to bipartite simultaneously the non-standardized ECI ($\nth{\ceigs}{2} = \nth{\cclusts}{2}$) and PCI ($\nth{\peigs}{2} = \nth{\svals}{2}^{-1} \, \nth{\pclusts}{2}$) in two disjoint sets of countries and products. See Supplementary Information (SI) for the full theoretical proof.

While equivalent to the clustering interpretation of the algorithm, the co-clustering approach offers additional insights into the mechanisms that group together countries and products. If we consider the bi-incidence matrix of the graph $\grp{\specs}$, $[\mtx{R}, \mtx{C}]^{\transp}$, we can define the specialization matrix $\spec$ as $\spec = \mtx{R}^{\transp} \, \mtx{C}$, where each element $R_{ki}$ and $C_{kj}$ of the incidence matrices $\mtx{R}$ and $\mtx{C}$ of the bi-graph $\grp{\spec}$ assume value one if country $c_i$ competitively produces product $p_j$, and zero otherwise. Then, we can assume a latent factor explaining the specializations observed in $\mtx{R}$ and concentrations observed in $\mtx{C}$ to read the ECI and PCI as measuring this latent factor.\cite{GreenacreTheory1984, FoussEtAlBipartite2016, vanDamEtAlCorrespondence2021} Let us call this latent factor ``capabilities'' and suppose they are expressed as an intensity. Let us also assume that countries with a given capabilities intensity tend to specialize in products whose production requires a similar capabilities intensity and that products requiring a certain capabilities intensity concentrate into countries able to mobilize such capabilities level. In this context, the \ecipci{} aims to identify the latent scores that better explain the observed $\spec$. Moving from a vertices- to an edges-centered view, the algorithm maximizes the correlation between the latent scores for the countries and products, as shown by Figure~\ref{fig:country_product_probs_cca}e, so that in this latent space, countries endowed with some capabilities intensity will be close to products requiring a similar capabilities intensity to be produced competitively. See Methods for further details.

\begin{figure}[bp]
  \centering
  \includegraphics[width=0.9\textwidth]{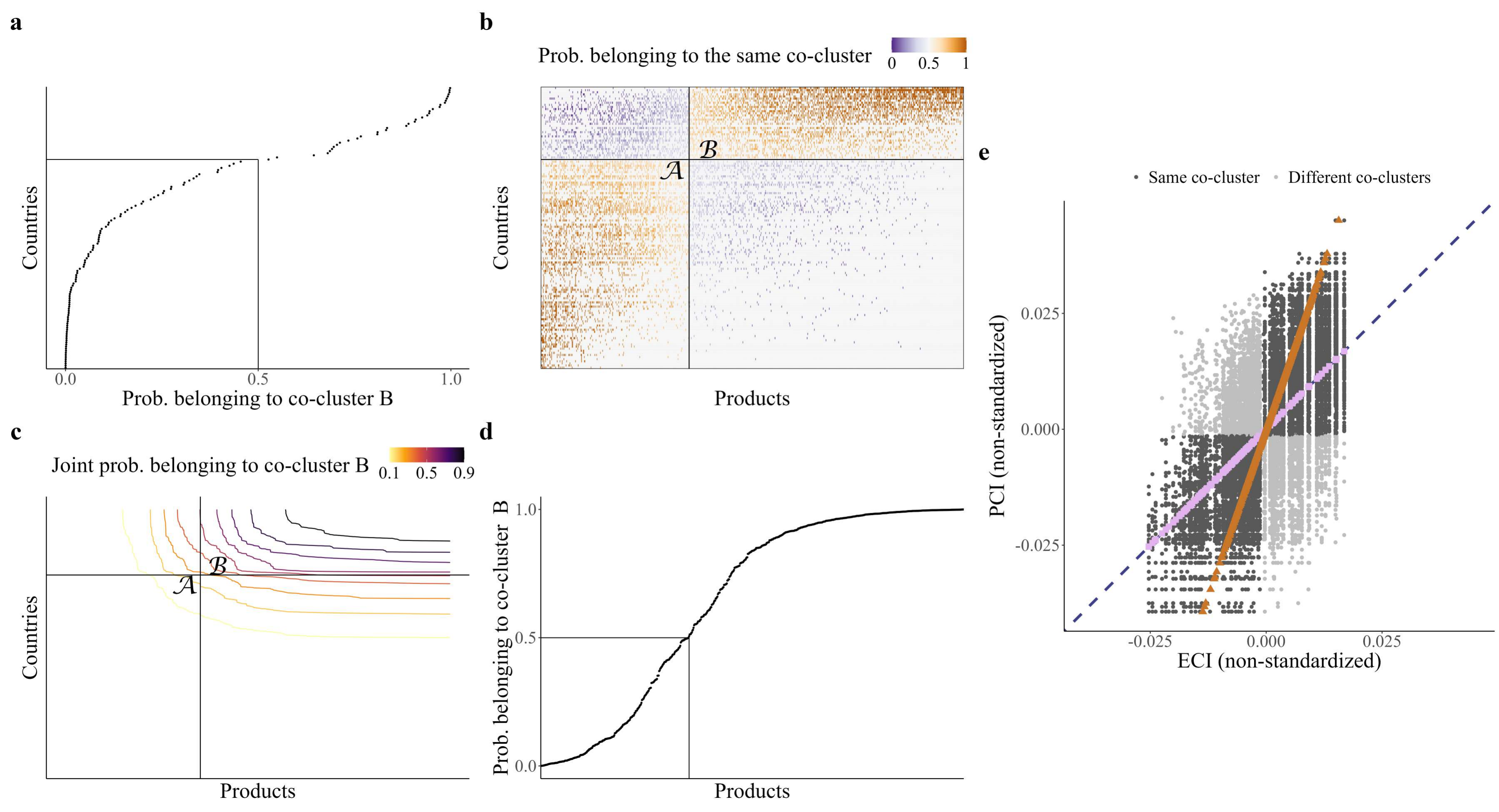}
  \caption{Probabilities of belonging to co-clusters and non-standardized ECI and PCI of each country-product pair in 1990. The fuzzy co-clustering algorithm assigns each country and each product to co-clusters $\mathcal{A}$ and $\mathcal{B}$ with a certain probability. When the probability of belonging to a specific co-cluster is higher than 0.5, countries and products are assigned to that co-cluster. Panel (a) shows the distribution of the probability of belonging to co-cluster $\mathcal{B}$ for countries. Countries with probability higher than 0.5 are assigned to cluster $\mathcal{B}$, and all other countries are assigned to cluster $\mathcal{A}$. The same occurs for products, as shown in panel (d). The two co-clusters, composed of both countries and products, are highlighted in orange at the top right and bottom left corners of the specialization matrix $\mathbf{M}$ displayed in panel (b). Pairs that do not belong to the same co-cluster are purple. The intensity of colors reflects the joint probability of belonging to the same co-cluster for each country-product pair. Panel (c) reports the joint probabilities of belonging to co-cluster $\mathcal{B}$ for countries (rows) and products (columns). As in panel (b), co-clusters $\mathcal{A}$ and $\mathcal{B}$ are marked in the bottom left and top right corners, respectively. For the purpose of this graph, the joint probability of belonging to co-cluster $\mathcal{B}$ has been computed under the hypothesis of independence. Panel (e) reports the non-standardized ECI and PCI of each country-product pair. The darker dots represent pairs of countries and products that belong to the same co-cluster. In blue is the diagonal of the Euclidean space. The orange triangles indicate the average PCI of products ($x$-axis) competitively produced by countries with a given ECI ($y$-axis). The purple squares represent the average ECI of the countries where each product over-concentrates. In all figures, we plot the non-standardized ECI. However, readers should be aware that the authors of the original ECI-PCI algorithm defined $\ecis \coloneq \big( \nth{\ceigs}{2} - \mean{\nth{\ceigs}{2}} \big) \big/ \sd{\nth{\ceigs}{2}}$ and $\pcis \coloneq \big( \nth{\peigs}{2} - \mean{\nth{\ceigs}{2}} \big) \big/ \sd{\nth{\ceigs}{2}}$.\cite{HidalgoHausmannBuilding2009, CIDEcomplexity2014} See SI for a discussion about the implications of this standardization procedure in a co-clustering context.}
  \label{fig:country_product_probs_cca}
\end{figure}

We can conclude that the \ecipci{} co-clusters countries and products in two balanced and disjoint sets that are internally similar and externally dissimilar. Then, a country $c$, picked at random within the ones that competitively produce the product $p$, and $p$ itself will belong to the same co-cluster with a high chance. Alternatively, we can conclude that the co-clusters result from the presence of latent features (e.g., capabilities) explaining both countries' specialization and products' concentration patterns.

The duality between ECI and PCI has several consequences. First, it is not possible to properly study economic development via ECI without considering the features of countries' production (PCI). Second, similar countries are grouped together because they export products with specific characteristics, i.e., products in their same co-cluster. From this perspective, PCI identifies products with greater (or lower) value for economic development, since they are in the cluster of more (or less) technologically advanced countries. At the same time, ECI identifies countries that competitively export more (or less) technologically advanced products.

To better understand this duality and extend the readability of the two indices, we modify the spectral-clustering algorithm by using a soft clustering technique after the usual SVD. We employ a clustering based on a Gaussian mixture model to provide posterior probabilities that summarize how well countries or products fit their co-cluster; see Methods for further details.\cite{ScruccaEtAlModel2023} As a result, together with the ECI and PCI scores, we also obtain the cluster to which countries and products belong associated with the probability of being in that cluster.

Besides making explicit the interrelation between ECI and PCI, the proposed co-clustering algorithm offers a more efficient way to obtain the indices. While computing and eigen-decomposing $\csim$ requires a $\mathcal{O}(nm^2)$ and a $\mathcal{O}(m^3)$ operations just to obtain the ECI, SVD-decomposing $\sym{\specs}$ requires a $\mathcal{O}(mn^2)$ operation, resulting in both indices simultaneously.

\subsection*{The interrelation between country and product complexity indices}
In this section, we employ our method to analyze the international trade data (see Methods). Figure \ref{fig:country_product_probs_cca}b shows the two resulting co-clusters, namely $\mathcal{A}$ and $\mathcal{B}$, embedded in the original contingency table $\spec$ for the year 1990. In this figure, the rows and columns of $\spec$ have been arranged based on their ECI and PCI, respectively, and colors indicate the probability of belonging to the same co-cluster. Co-cluster $\mathcal{A}$, in the bottom-left corner, comprises both low-ECI countries and low-PCI products; conversely, co-cluster $\mathcal{B}$ encompasses the remaining more complex countries and products. Colors highlight how some countries or products are well fitting their co-cluster while others---those at the boundary between $\mathcal{A}$ and $\mathcal{B}$---are not. The marginal probability of belonging to co-cluster $\mathcal{B}$ is shown in Figures \ref{fig:country_product_probs_cca}a and \ref{fig:country_product_probs_cca}d for countries and products, respectively, showing that only items with a probability higher than 0.5 are assigned to this co-cluster.

The interrelation between ECI and PCI, which is reflected in the belonging of countries and products to the same co-cluster, is also highlighted in Figure \ref{fig:country_product_probs_cca}e. Each dot represents a country-product pair. When the country and the product belong to the same co-cluster (either $\mathcal{A}$ or $\mathcal{B}$), the pair dot is colored in dark gray, while light gray indicates a disagreement between the two co-cluster classifications. As in the previous figure, co-clusters $\mathcal{A}$ and $\mathcal{B}$ are in the bottom-left and top-right quadrants of the plot, respectively. The figure highlights a positive correlation of 0.59 between ECI and PCI, as products with high PCI are competitively produced by countries with a high ECI on average, as shown by the orange triangles representing the average ECI of countries exporting products with a given PCI. Similarly, countries with high ECI specialize in products with, on average, high PCI, as emphasized by the purple squares overlapping the diagonal of the plane. This figure confirms that, in the ECI-PCI framework, countries and products are not disjoint elements but are indissolubly bounded in the inner features of the algorithm. It also suggests the presence of a hidden layer able to simultaneously account for countries' specialization and products' concentration patterns, as explained in the Methods.

This relationship is even more relevant when we move to a dynamic analysis. If we compare the co-clusters' composition in different years, we observe changes in countries based on their export baskets; see the SI for a full overview. In 1970, the composition of co-cluster $\mathcal{A}$ was defined by African and South Asian countries that specialized in inedible crude materials or animal and vegetable oils, fats and waxes. In the new century, this co-cluster have been instead dominated by producers of crude petroleum, oil and mineral fuels. Co-cluster $\mathcal{B}$ is more stable over time for what concerns countries---all exporting specialized machines---, with a stable presence of Germany and Switzerland across the entire period, with Japan joining from the 1980s. However, high-PCI products have changed over time. This category was initially characterized by tools with electric motors, including dish washing machines and transport equipment. More recently, it comprises optical instruments, photography, and tools from the metal and chemistry industries. In the new century, the composition of top ECI countries has also changed for the first time, with European countries overtaken by emerging top-Asian countries, including Taiwan and South Korea.

To further stress the strong interrelation between ECI and PCI, we propose two simulation exercises to study how ECI and PCI change when countries specialize in products associated with different PCI levels. We based our simulations on international trade data for 1990.
In the first simulation, countries virtually specialize in products they were not exporting before. For the sake of simplicity, we select only three sets of products: products belonging to co-cluster $\mathcal{A}$ with high probability (higher than 0.997), those firmly belonging to co-cluster $\mathcal{B}$ (probability higher than 0.997), and borderline products (i.e., those with a probability of belonging to their co-cluster lower than 0.6). To analyze the results of these simulation exercises, we first focus on how a product's PCI varies when an additional country starts competitively producing it. As shown in Figure \ref{fig:sim_change}a, this change is inversely proportional to the product's starting PCI. Low-PCI products tend to increase their PCI since they are, on average, added to countries with an ECI higher than countries usually exporting them (low-ECI countries). Borderline products slightly decrease their PCI, on average, once they have been added to new countries because they have a high chance of landing in low-ECI countries (co-cluster $\mathcal{A}$ is much larger than co-cluster $\mathcal{B}$; see Figure \ref{fig:country_product_probs_cca}). However, the more pronounced decrease in PCI is associated with high-PCI products. The PCI of these products is determined by being exported exclusively by high-ECI countries, as it suddenly drops as soon as countries with lower ECI specialize in it. To complete this picture, we test how new specializations are reflected in the ECI of countries adopting these new products. Figure \ref{fig:sim_change}b shows that competitively exporting new products increases countries' ECI, on average, except when the products' PCI is particularly low (i.e., for products with a high probability of belonging to co-cluster $\mathcal{A}$). As expected, ECI gain grows with the added product's PCI, but, on average, countries also benefit from adding borderline products. Nevertheless, the effect of new specialization is heterogeneous across countries, depending on their initial development level, as shown in Figure \ref{fig:sim_change}c. Low-ECI countries in co-cluster $\mathcal{A}$ benefit the most, on average, from new specializations, and their gain is maximum when they specialize in high-PCI products from co-cluster $\mathcal{B}$. Those countries also show the highest variability in ECI change, with a considerable decrease in ECI when they add another low-PCI product to their export basket. The advantages for high-ECI countries are marginal as they experience limited ECI variations, but they must avoid specializing in low-PCI products to preserve their status. The interconnection between ECI and PCI is even more evident when we analyze simulation results for single countries. Figure \ref{fig:sim_change}d shows how Japan's ECI, the highest in our sample for 1990, varies when we introduce new specializations. Specifically, it increases only with the introduction of products firmly belonging to co-cluster $\mathcal{B}$ and decreases otherwise. Interestingly, borderline products, independently of their initial co-cluster, move to the co-cluster $\mathcal{B}$ once they have been added to Japan's export basket. When we replicate this analysis for a borderline country, such as Brazil (Figure \ref{fig:sim_change}e), its ECI increases for all co-cluster $\mathcal{B}$ and borderline products. This new specialization has almost no effect on the new products' cluster belonging. Finally, the introduction of new products in the export basket of Guinea---the country with the lowest ECI for 1990---usually results in a substantial increase in the ECI, except when it adds a few products firmly belonging to co-cluster $\mathcal{A}$. As for Japan, the new specialization of Guinea affects the new product's PCI, moving all borderline products to the co-cluster $\mathcal{A}$.

\begin{figure}[tp]
  \centering
  \includegraphics[width=0.85\textwidth]{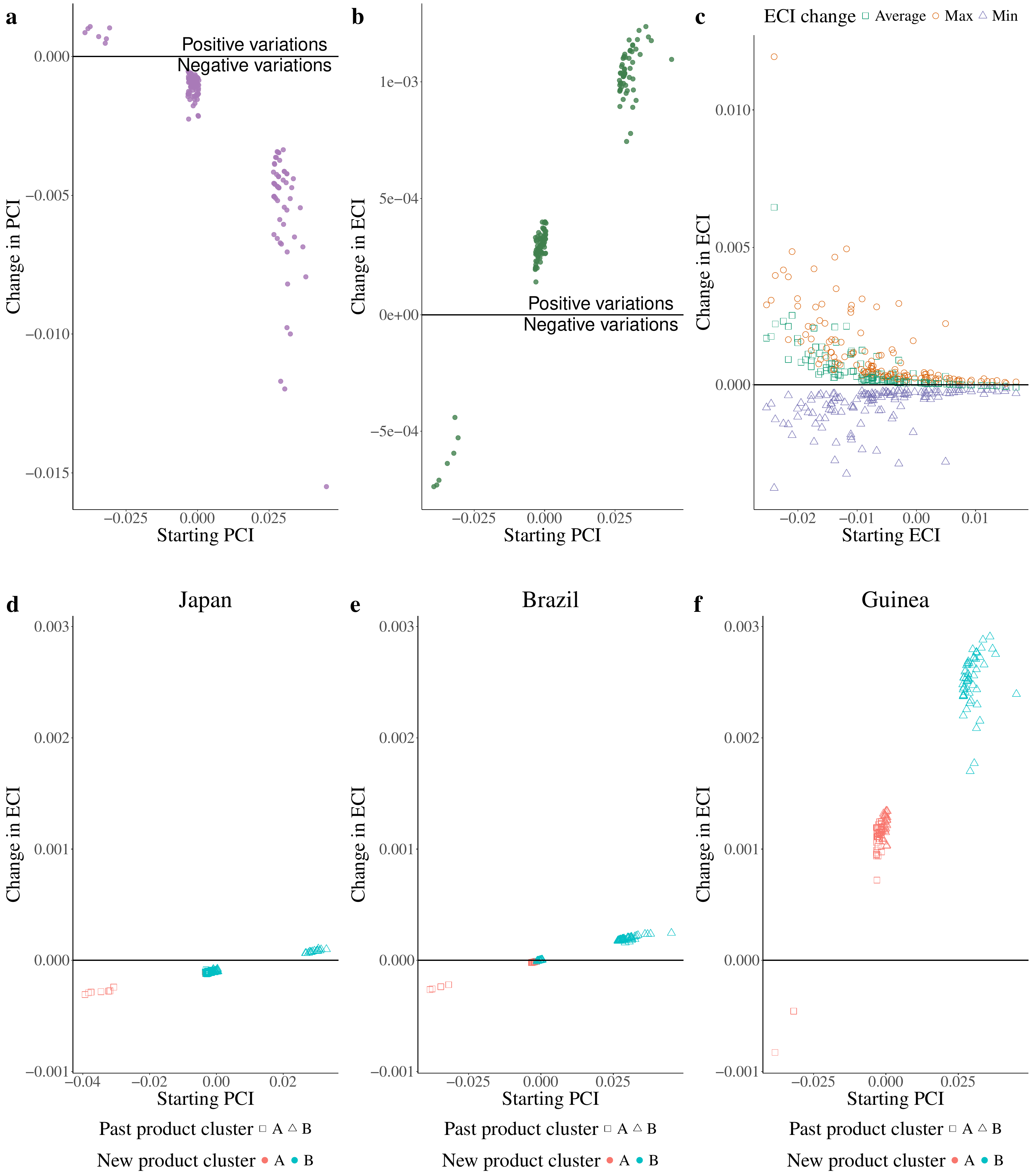}
  \caption{Changes in non-standardized ECI and PCI when countries specialize in new products in 1990. These plots summarize data resulting from multiple simulations in which we add an additional specialization to a country's export basket. Note that we add specializations depending on countries' previous export baskets. Therefore, the list of potential specializations to add varies across countries. Panel (a) reports the average PCI change of products newly introduced by counties with respect to their starting PCI value. The average ECI change of countries specializing in these new products, depending on their starting PCI, is shown in panel (b). Panel (c) summarizes the range of possible changes in ECI---its maximum, average, and minimum---following the specialization of a country in a new product with respect to the starting country's ECI. Panels (d)-(f) report how three countries (Japan, Brazil, and Guinea) modify their ECI depending on the ECI of the product they introduce. These plots also show how products' PCI move from co-cluster $\mathcal{A}$ to co-cluster $\mathcal{B}$ and \textit{vice versa} depending on the country starting to export them.}
  \label{fig:sim_change}
\end{figure}

The second simulation iteratively modifies the export basket of Guinea (identified by the ISO3 code GIN). In this simulation, we iteratively add the product that maximizes the country's ECI until it is impossible to find other products increasing the ECI. As reported in Figure \ref{fig:sim_gin}, Guinea's ECI experiences rapid growth by adding products that are among those with the highest PCI, such as chemical products, tools and machines for specialized industries or welding, brazing, cutting, and working metals. This country's ECI growth results in an increase in the PCI of products it was exporting before and, in turn, in the ECI of other countries specialized in those products, such as Nigeria (NGA), Angola (AGO), and Sierra Leone (SLE). The effect is more pronounced for countries whose export basket overlaps the most with Guinea's one. While some countries see their ECI ranking growing, none reaches high ECI values since their export basket does not vary during the simulation. At the same time, Guinea's export basket resembles more and more high-ECI countries' export baskets, moving away from low-ranking countries' specializations. ECI of countries in co-cluster $\mathcal{B}$ is almost unaffected.

\begin{figure}[tbp]
  \centering
  \includegraphics[width=0.85\textwidth]{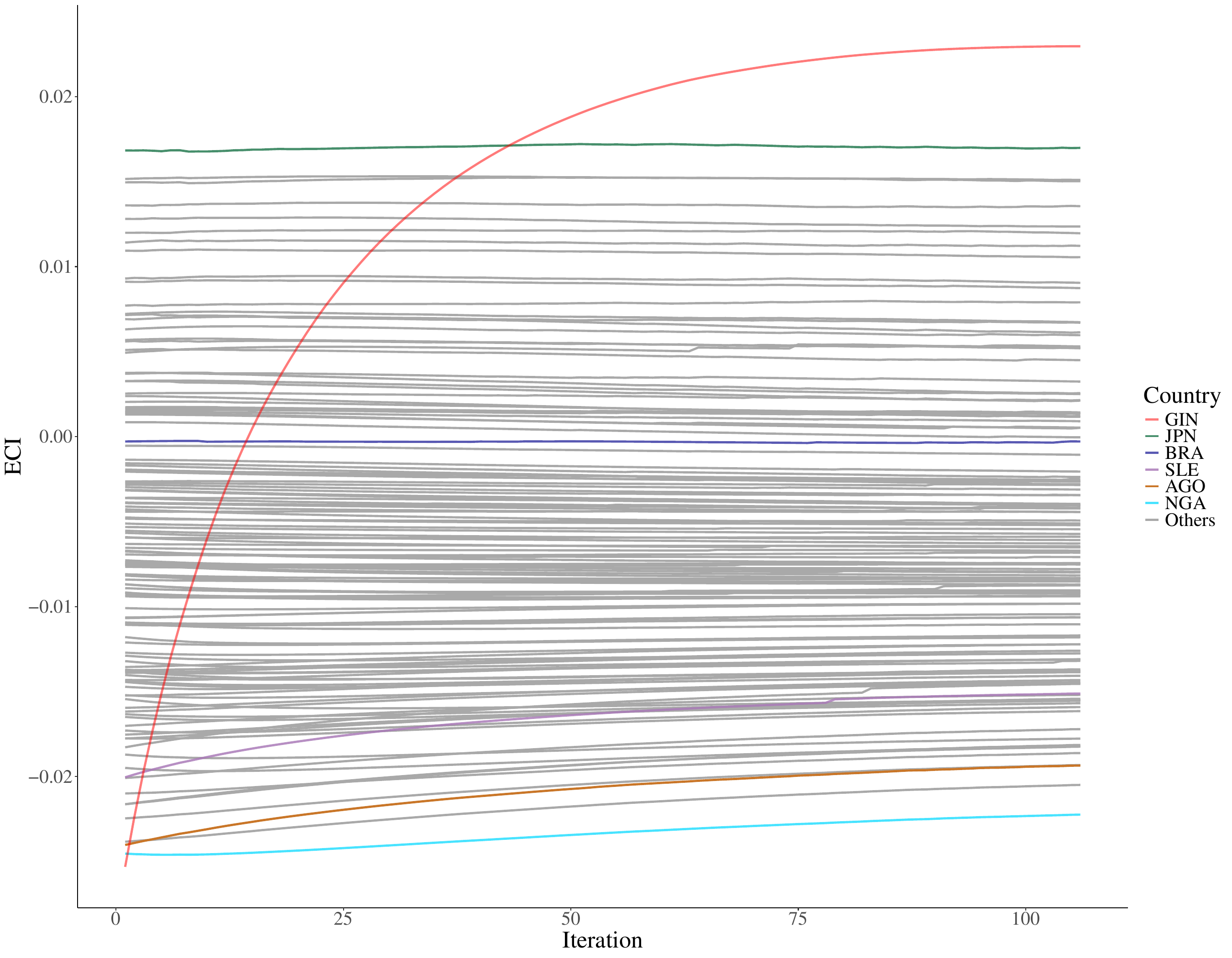}
  \caption{ECI ranking resulting from adding high-PCI products to Guinea's export basket. In this simulation, we iteratively specialize Guinea (GIN, in red) in the product that maximizes its ECI. The simulation stops after 106 iterations when no products would increase Guinea's ECI. Other countries' export baskets are unchanged throughout the simulation. In this simulation, Guinea overtakes all other countries after 44 iterations and reaches its maximum at the 106th iteration. The added products are among those with the highest PCI, especially at the beginning of the simulation, with a very high probability (above 0.9865 in general and above 0.9924 for the first 10 iterations) to belong to co-cluster $\mathcal{B}$. }
  \label{fig:sim_gin}
\end{figure}

\section*{Discussion}
Measuring and understanding national and regional economic development is crucial for economic scholars and policymakers. As evidenced by their widespread diffusion, the ECI and other ``complexity metrics'' have proved to be solid allies in achieving this purpose since they provide a simple means to capture complex economic dynamics from easily accessible data. However, the uncertainty on the interpretation of these indicators and discussions on methodological aspects have undermined their reputation and pose challenges to their further diffusion in economic analyses. 

In this paper, we proved that the \ecipci{} is equivalent to a co-clustering algorithm simultaneously grouping countries and products into two sets, based on how similar they are to one another concerning some hidden unobservable features. This co-clustering approach offers several advantages---including efficiency from an algorithm viewpoint---and opens new avenues for the understanding of economic development and its long-term drivers. Most importantly, it reconciles the two mostly known interpretations of ECI. Even though aligned with the similarity-measure interpretation, the co-clustering approach suggests the presence of a hidden layer in the international exports network that, in line with the original MoR interpretation, could be seen as countries' capabilities. Differently from recent literature advancements,\cite{McNerneyEtAlBridging2023} the co-clustering perspective suggests a co-determination of ECI and PCI, that so result intrinsically connected to one another. Therefore, ECI is more than an index summarizing complex information from trade data, but only the ECI-PCI combination helps to discern between products more or less supportive of long-term economic development. A high-PCI product is one that, given the current overall network structure, is competitively exported only by more economically and technologically developed countries. This suggests that the capabilities needed for its production are relatively rare and distant from more widespread ones. These capabilities may change over time or be absorbed by other countries too, resulting in variations in PCI scores. Consequently, the analysis of a country's economic development cannot be disjoint by the study of what they are capable of producing. This suggests the need for a more prominent role of PCI in contributions exploiting ECI as a measure of economic development. All in all, the integrated view proposed in this paper suggests an evolutionary perspective on complexity metrics and brings back production to the core of the debate on economic growth. This aspect is particularly relevant in current times, characterized by crucial economic challenges, such as the green transition, that can lead to profound labor-market transformation and severely affect countries' competitiveness. In this context, identifying products and capabilities prone to economic development may help mitigate the adverse effects of these impelling transformations.

\section*{Methods}

\subsection*{International trade data} \label{sec:data}
As aforementioned, the empirical parts of the article use international trade data from UN~COMTRADE as cleaned by The Growth Lab at Harvard University; see \url{https://atlas.cid.harvard.edu/} for further details.\cite{UNComtrade, GrowthLabInternational2019} To maximize the temporal consistency of the data, we decided to eliminate countries and products not constantly observed, as well as any trade of services. Specifically, we analyze the value, in current USD, of 763 goods exported by 134 different countries between 1962 and 2021.

The products are classified according to the SITC Rev.~2 classification (4-digit detail level). On the one hand, this classification allows us to exploit a very long time series, covering sixty years of observations. On the other, to preserve temporal consistency, SITC generates issues with new goods segments.\cite{GrowthLabInternational2019} These choices do not affect the analysis exposed in this article, since our results depend on statistical properties of the indicators, which are independent of data characteristics. 

\subsection*{\ecipci{} in a clustering context}
The ECI-PCI framework aims to measure the capabilities of national economies using information on their export baskets (see Figure \ref{fig:tri_bi_graphs}). The algorithm works on a binary contingency table $\spec = [\elij{\specs}{cp}]$ that represents a bipartite graph $\grp{\specs}$ measuring whether a country $c$ produces competitively a product $p$ ($\elij{\specs}{cp} = 1$) or not ($\elij{\specs}{cp} = 0$).  The number of products produced competitively by a country, said \emph{diversity}, is stored in the diagonal matrix $\divty = \mathrm{diag}(\sum_p \elij{\specs}{cp})$. Instead, the number of countries that competitively produce a product, referred to as \emph{ubiquity}, is collected in the diagonal matrix $\ubity = \mathrm{diag}(\sum_c \elij{\specs}{cp})$. The MoR algorithm recursively uses \emph{diversity} and \emph{ubiquity} to correct each other. After some iterations, the algorithm converges to what is known as ECI and PCI.\cite{HidalgoHausmannBuilding2009} This algorithm is equivalent to solving the eigenvalue problem of two random-walk normalized similarity matrices: $\cproj = \divty^{-1} \, \csim = \divty^{-1} \, (\spec \, \ubity^{-1} \, \spec^{\transp})$ and $\pproj = \divty^{-1} \, \psim = \ubity^{-1} \, (\spec^{\transp} \, \divty^{-1} \, \spec)$.\cite{CaldarelliEtAlNetwork2012, HausmannEtAlAtlas2014} In this framework, $\elij{\cprojs}{cc'}$ is the probability $\prob_{c \rightarrow c'}$ of a random walker to move from $c$ to $c'$ in one step, given the ubiquity of the products competitively produced by $c$ or $c'$, while $\elij{\pprojs}{pp'}$ is the corresponding probability $\prob_{p \rightarrow p'}$.\cite{CaldarelliEtAlNetwork2012, MealyEtAlInterpreting2019}

The contribution by Mealy and coauthors\cite{MealyEtAlInterpreting2019} showed that the ECI-PCI algorithm is equivalent to a spectral clustering technique. Limiting the analysis to bipartitions, spectral clustering partitions a similarity graph into two disjoint sets---$\mathcal{A}$ and $\mathcal{B}$, with $\mathcal{A} \cap \mathcal{B} = \emptyset$---that are internally similar and externally dissimilar. Specifically, the normalized-cut criterion (Ncut) minimizes $\mathrm{Ncut}(\mathcal{A}, \mathcal{B}) =  \prob_{\mathcal{A} \, \rightarrow \, \mathcal{B}} + \prob_{\mathcal{B} \, \rightarrow \, \mathcal{A}}$, where $\prob_{\mathcal{A} \rightarrow \mathcal{B}} = \sum_{c \in \mathcal{A}, \, c' \in \mathcal{B}} \elij{\csims}{cc'} \big/ \sum_{c \in \mathcal{A}, \, p} \elij{\specs}{cp}$ is the probability of a random walker to transit from partition $\mathcal{A}$ to $\mathcal{B}$ in one step, while exploring the graphs $\clyr{\grp{\specs}}$ or $\plyr{\grp{\specs}}$.\cite{ShiMalikNormalized2000, MeilaShiRandom2001, vonLuxburgTutorial2007} This clustering technique proceeds in three steps. First, it transforms the data contained in $\spec$ into a pairwise similarity matrix $\csim = \spec \, \ubity^{-1} \, \spec^{\transp}$. Second, it eigendecomposes the normalized Laplacian matrix of $\csim$, which is $\rw{\clpl} = \divty^{-1} \, (\divty - \csim)$. Third, it uses standard clustering techniques, such as $k$-means, on the second smallest eigenvector of $\rw{\clpl}$, $\nth{\ceigs}{2}$, to cluster the data. Given that the (non-standardized) ECI is, by definition, the second eigenvector of $\cproj = \divty^{-1} \, \csim$, it is equivalent to the approximate solution of the spectral problem formulated by the Ncut algorithm, $\rw{\clpl} \, \ceig = \eigvals \, \ceig$, simply ordering the eigenvalues from the largest to the smallest instead of the opposite, since $\cproj \, \ceig = (1 - \eigvals) \, \ceig$. Symmetrical considerations apply to the PCI. Following this interpretation, the algorithm divides countries (products) into two groups based on how similar two countries (products) are to each other in terms of their export specialization (concentration) pattern.

\subsection*{\ecipci{} in a co-clustering context}
However, through stochastic complementation,\cite{MeyerStochastic1989, YenEtAlLink2011} it is possible to show that each random walker's \emph{one-step} transition probability $\prob_{c \rightarrow c'}$ ($\prob_{p \rightarrow p'}$) associated with the transition matrix $\cproj$ ($\pproj$)---defined on the monopartite projection $\clyr{\grp{\specs}}$ ($\plyr{\grp{\specs}}$) of the graph $\grp{\specs}$ on the countries' (products') layer---actually reflects a \emph{two-step} transition probability $\prob_{c \Rightarrow c'}$ ($\prob_{p \Rightarrow p'}$) on the transition matrix $\rw{\xi}$---defined on the bipartite graph $\grp{\specs}$---, where $\rw{\xi}$ is defined as
\[
\rw{\xi} = \mtx{\Delta}^{-1} \, \mtx{\xi} = 
\begin{bmatrix}
  \divty^{-1} & \mtx{0} \\
  \mtx{0} & \ubity^{-1}
\end{bmatrix}
\begin{bmatrix}
  \csim & \mtx{0} \\
  \mtx{0} & \psim^{\transp}
\end{bmatrix}
=
\begin{bmatrix}
  \divty^{-1} & \mtx{0} \\
  \mtx{0} & \ubity^{-1}
\end{bmatrix}
\left(
\begin{bmatrix}
  \mtx{0} & \spec \\
  \spec^{\transp} & \mtx{0}
\end{bmatrix}
\begin{bmatrix}
  \divty^{-1} & \mtx{0} \\
  \mtx{0} & \ubity^{-1}
\end{bmatrix}
\begin{bmatrix}
  \mtx{0} & \spec \\
  \spec^{\transp} & \mtx{0}
\end{bmatrix}
\right)
=
\begin{bmatrix}
  \mtx{0} & \cproj \\
  \pproj^{\transp} & \mtx{0}
\end{bmatrix},
\]

Following this interpretation, the algorithm divides countries and products into two groups based on how likely it is for a random walker to move from partition $\mathcal{A}$ to $\mathcal{B}$ in two steps, while exploring the exports' bipartite graph $\grp{\specs}$. That is, two countries will belong to the same group depending on how similar they are to each other in terms of their export specialization pattern, how similar the products they competitively produce are to each other in terms of their exporters' concentration patterns, and so on along spiral-like reasoning.

This is equivalent to co-clustering the specialization matrix, simultaneously accounting for both the specialization and concentration patterns in the data. Co-clustering (or bi-clustering) is a set of unsupervised learning techniques that perform simultaneous clustering on both dimensions of a data matrix to identify co-clusters, i.e., subsets of rows that exhibit similar behaviors across a subset of columns and \emph{vice versa}. While Hartigan\cite{HartiganDirect1972} introduced the idea of co-clustering in 1972, its first implementation occurred in 2000 by Cheng and Church\cite{ChengChurchBiclustering2000} to discover co-clusters in gene array expression data. Since then, many other algorithms have been created by embedding different strategies and algorithmic concepts.\cite{MadeiraOliveiraBiclustering2004, PontesEtAlBiclustering2015} Among others, Dhillon,\cite{DhillonCoClustering2001} Zha et al.,\cite{ZhaEtAtBipartite2001} and Kluger et al.\cite{KlugerSpectral2003} identify co-clusters using bipartite spectral graph partitioning, relying on Singular Value Decomposition (SVD).

First applied on a bipartite graph between $m$ documents and $n$ words represented by an $n \times m$ adjacency matrix $\spec$, the procedure introduced by Dhillon returns two mutually exclusive co-clusters based on a concept of duality between the rows and columns of the bi-adjacency matrix of the graph: i.e., row clusters depend on their column distributions, and column clusters are determined by co-occurrence in rows. This is achieved by applying the normalized Ncut criterion and identifying the node bi-partition $(\mathcal{A}, \mathcal{B})$ that minimizes the crossing edges weight between the two partitions. This NP-hard problem can be approximately solved by bi-partitioning the vector $\nth{\clusts}{2} = [\nth{\cclusts}{2}, \nth{\pclusts}{2}] = [\divty^{-1/2}\nth{\csv}{2}, \ubity^{-1/2}\nth{\psv}{2}]$ where $\nth{\csv}{2}$ and $\nth{\psv}{2}$ are the second left and right singular vectors obtained from the SVD of $\spec_{sym} = \divty^{-1/2} \spec \ubity^{-1/2}$. The bipartitioning is then performed using the $k$-means algorithm on $\nth{\clusts}{2}$. Considering that the clustering is performed simultaneously on the reduced representation of rows and columns, the two obtained clusters are co-clusters formed by sets of both rows (e.g., documents or countries) and columns (words or products).

In this paper, instead of applying $k$-means on $\nth{\clusts}{2}$, we perform the bi-partitioning using a Gaussian mixture model (GMM).\cite{ScruccaEtAlModel2023} GMM is a probabilistic model based on the assumption that the observations are generated from a mixture of a finite number of Gaussian distributions (in our case, two since we are interested in a bi-partition) whose parameters are unknown a priori. These models can be thought of as a generalization of $k$-means and incorporate information about the covariance structure of the data as well as the centers of the latent Gaussians. The fitting of the mixture of Gaussian models is performed by expectation maximization (EM). Differently from $k$-means---which is a hard clustering method, associating each point to one and only one cluster---, Gaussian mixture models return for each observation the probability of being associated with any cluster. Therefore, our approach returns a probability of belonging to co-clusters $\mathcal{A}$ and $\mathcal{B}$ for each row and column of the matrix.

Starting from this spectral co-clustering strategy, Kluger and coauthors \cite{KlugerSpectral2003} introduced an algorithm for clustering tumor profiles collected via RNA microarrays. Translated into the terminology of our paper, Kluger and coauthors assume that the bi-adjacency matrix $M$ is the result of a noisy process depending on three factors: a hidden factor $\elij{E}{cp}$, the \emph{diversity} $\elij{\divtys}{cc}$ of country $c$, and a latent factor $\eli{\peigs}{p}$ representing the overall tendency of countries to specialize in product $p$. This setting corresponds to partitioning rows and columns under the assumption that the data has an underlying checkerboard structure which is represented by the hidden factor $\elij{E}{cp}$. From an algorithmic point of view, the algorithms proposed by Dhillon (and consequently us) and Kluger and coauthors differ only for the type of matrix normalization applied to make the checkerboard structure of $\spec$ more evident.

\subsection*{\ecipci{} latent factor interpretation}
Based on the results by Kluger and coauthors, an alternative, more economically sound interpretation of the co-clustering framework proposed, may be considered. Assuming that countries at the same economic development stage over- or under-specialize in a subset of products, we can apply a co-clustering strategy to detect these specialization patterns in the data. Indeed, as Kluger and coauthors explain,\cite{KlugerSpectral2003} we can assume the matrix $\spec$ being a noisy representation of a checkerboard-structured matrix $\mtx{E} = [\elij{E}{cp}]$ whose entries are determined by a latent factor driving the specialization phenomenon just mentioned. In an idealized case, where this latent factor is the only driver of the whole phenomenon, the $\elij{E}{cp}$ values will be constant within each co-cluster, leading to two step-like estimators: $\eli{\ceig}{2}$ and $\eli{\peig}{2}$.

However, a latent factor interpretation of the \ecipci{} can still be valid even though $\mtx{E}$ is not block-structured. Let us define a bi-incidence matrix $[\mtx{R},\mtx{C}]^{\transp}$ where
\begin{equation}
  \mtx{R} = \begin{blockarray}{ccccc}
    \phantom{} & c_1 & c_2 & \cdots & c_m \\
    \begin{block}{c[cccc]}
      e_1 & 1 & 0 & \cdots & 0 \\
      e_2 & 0 & 1 & \cdots & 0 \\
      \vdots & \cdots & \vdots & \ddots & \vdots \\
      e_k & 0 & 0 & \cdots & 1 \\
    \end{block}
  \end{blockarray}
   ,\qquad
  \mtx{C} = \begin{blockarray}{ccccc}
    \phantom{} & p_1 & p_2 & \cdots & p_n \\
    \begin{block}{c[cccc]}
      e_1 & 1 & 0 & \cdots & 0 \\
      e_2 & 0 & 1 & \cdots & 0 \\
      \vdots & \cdots & \vdots & \ddots & \vdots \\
      e_k & 0 & 0 & \cdots & 0 \\
    \end{block}
  \end{blockarray}
  ,
\end{equation}
and both $R_{ki}$ and $C_{kj}$ assume value one if country $c_i$ competitively produces product $p_j$, and zero otherwise. The specialization bi-adjacency matrix is then defined as $\spec = \mtx{R}^{\transp} \, \mtx{C}$. We can also express $\mtx{R}$ and $\mtx{C}$ as
\begin{equation}
  \begin{cases}
    \mtx{R} \, \csv = \vect{\rho}, \\
    \mtx{C} \, \psv = \vect{\psi},
  \end{cases}
\end{equation}
where $\vect{\rho}$ and $\vect{\psi}$ can be called ``specialization'' and ``concentration scores'', while $\csv$ and $\psv$ are ``latent scores'' (please, remember that $\nth{\ceig}{2} = \nth{\cclust}{2} = \divty^{-1/2} \, \nth{\csv}{2}$ and $\nth{\peig}{2} = \nth{\svals}{2}^{-1} \, \nth{\pclust}{2} = \nth{\svals}{2}^{-1} \, \ubity^{-1/2} \, \nth{\psv}{2}$). Therefore, the \ecipci{} is a special case of Canonical Correlation (CCA).\cite{HotellingRelations1936, UurtioEtAlTutorial2017, AbdiEtAlCanonical2018} Accordingly, the phenomenon represented by $\grp{\specs}$ can be organized in two complementary views. From the viewpoint of countries, they are specialization patterns represented by $\mtx{R}$. While, from products' viewpoint, they are concentration patterns represented by $\mtx{C}$. From a clustering perspective, the key idea of CCA is that the information embedded in the different views on the same phenomenon is complementary, and analyzing the relationship between these views conveys the underlying clusters. Specifically, the ECI and PCI are optimal mappings between the two views since $\csv$ and $\psv$ maximize the covariance between $\vect{\rho}$ and $\vect{\psi}$ under the constraints that these last are centered and have unit variance.\cite{GreenacreTheory1984, FoussEtAlBipartite2016}

Therefore, if we observe that country $c$ is specialized in product $p$ (or that $p$ is concentrated in $c$, as seen before) and believe in a latent factor driving both the specialization and concentration phenomena, we should expect $c$ and $p$ to be associated in the ECI-PCI latent space. We can also expect $c$ and $c'$ to be as close in the ECI-PCI latent space the more they have specialized in similar products, once compensated for disparities between their \emph{diversity} and for some random noise in the data. Figure~\ref{fig:country_product_probs_cca}e shows the latent space spanned by $\eli{\ceig}{2}$ and $\eli{\peig}{2}$ and highlights how, in this space, countries and products are substantially and positively associated, as the cloud of points lies along the diagonal line. It also shows, for each product $p$, an orange triangle corresponding to the average ECI of countries where it is competitively produced, while, for each country $c$, the purple squares represent the average PCI of the products it competitively produces. The purple squares lie on the diagonal of the space spanned by ECI and PCI since it is possible to show that the ECI of a country is equivalent to the average PCI of the products it competitively produces (i.e., $\nth{\ceigs}{2} = \divty^{-1} \, \spec \, \nth{\peigs}{2}$); see SI. However, the same is not true for the average ECI values (orange triangles) since the latent factor driving both specialization and concentration processes does not explain the whole variance in the data. Indeed, since $\nth{\peigs}{2} = \nth{\svals}{2}^{-2} \, \ubity^{-1} \, \spec^{\transp} \, \nth{\ceigs}{2}$, only if $\nth{\svals}{2} = 1$ the PCI of each product would be equal to the average ECI of the countries that competitively produce it, too. This happens when the whole variance in the data is captured by the space spanned by ECI and PCI. Therefore, Figure~\ref{fig:country_product_probs_cca}e also helps illustrate that the \ecipci{} maximizes the correlation strength (proportional to $\nth{\svals}{2}^{-2}$) between the two latent factor estimators, one for the countries and one for the products.

Therefore, a latent factor interpretation of the \ecipci{} reconciles the more recent similarity-based interpretation of the indices with the original capabilities-based one. However, it is beyond the scope of this paper to establish whether the ECI and PCI are proper measurements of organizational and technological capabilities or whether they may be interpreted as expressions of some other latent factor that explains both the specialization and concentration processes.

\section*{Data availability}
Data and code to reproduce the analysis of the article are available upon request.


\section*{Author information}

\subsection*{Contributions}
All authors contributed equally to the paper. C.B., J.Di~I., and M.I.: Conceptualization; Data curation; Formal analysis; Investigation; Methodology; Software; Visualization; Writing – original draft; Writing – review \& editing.

\section*{Ethics declarations}

\subsection*{Competing interests}
The authors declare no competing interests.

\clearpage
\appendix
\renewcommand{\thesection}{S\arabic{section}}
\renewcommand{\thefigure}{S\arabic{figure}}
\renewcommand{\thetable}{S\arabic{table}}
\renewcommand{\theequation}{S\arabic{equation}}

\vskip-36pt%
{\raggedright\sffamily\fontsize{12}{16}\selectfont Supplementary Information for\par}%
\vskip10pt%
{\raggedright\sffamily\bfseries\fontsize{20}{25}\selectfont Reinterpreting Economic Complexity: A co-clustering approach\par}%
\vskip10pt%
{\raggedright\fontsize{12}{12}\usefont{OT1}{phv}{b}{n} Carlo Bottai\textsuperscript{1}, Jacopo Di Iorio\textsuperscript{2} \& Martina Iori\textsuperscript{3}\par}%
\vskip10pt%
{\raggedright\fontsize{10}{12}\usefont{OT1}{phv}{m}{n} 1~Department of Economics, Management and Statistics, University of Milano–Bicocca, Milano, Italy\par}%
{\raggedright\fontsize{10}{12}\usefont{OT1}{phv}{m}{n} 2~Department of Statistics, Penn State University, University Park (PA), USA\par}%
{\raggedright\fontsize{10}{12}\usefont{OT1}{phv}{m}{n} 3~Institute of Economics \& L'EMbeDS, Sant'Anna School of Advanced Studies, Pisa, Italy\par}%

\section*{Supplementary Methods}

\section{Equivalence between the clustering and co-clustering approaches}
In the following two propositions, we prove the connection between the $\ecis$ (and the $\pcis$) and the singular vector decomposition of the matrix $\sym{\specs} = \divty^{-1/2} \, \spec \, \ubity^{-1/2}$. These two results are pivotal to rewrite the \ecipci{} algorithm as a co-clustering algorithm that is indeed based on the singular vectors of $\sym{\specs}$ rescaled according to $\divty^{-1/2}$ and $\ubity^{-1/2}$.

\begin{proposition} \label{prop:eci}
  Given a matrix $\spec$, its (non-standardized) $\ecis$ is $\nth{\cclusts}{2} = \divty^{-1/2} \, \nth{\csv}{2}$ where $\nth{\csv}{2}$ is the second left singular vector of the matrix $\sym{\specs} = \divty^{-1/2} \, \spec \, \ubity^{-1/2}$.
\end{proposition}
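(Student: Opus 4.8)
The plan is to reduce the eigenvalue problem that defines the (non-standardized) $\ecis$ to the singular value problem for $\sym{\specs}$ by means of a similarity transformation implemented by $\divty^{1/2}$. Recall from the clustering discussion that $\nth{\cclusts}{2}$ is, by definition, the second eigenvector of the row-stochastic matrix $\cproj = \divty^{-1} \, \csim = \divty^{-1} \, (\spec \, \ubity^{-1} \, \spec^{\transp})$, with eigenvalues ordered from largest to smallest. The target is to show that this eigenvector coincides with $\divty^{-1/2} \, \nth{\csv}{2}$, where $\nth{\csv}{2}$ is the second left singular vector of $\sym{\specs} = \divty^{-1/2} \, \spec \, \ubity^{-1/2}$.

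First I would record the standard fact that the left singular vectors $\nth{\csv}{i}$ of $\sym{\specs}$ are precisely the eigenvectors of $\sym{\specs} \, \sym{\specs}^{\transp}$, with eigenvalues equal to the squared singular values $\eli{\svals}{i}^2$. A one-line computation then yields the central identity
\[
  \sym{\specs} \, \sym{\specs}^{\transp}
  = \divty^{-1/2} \, \spec \, \ubity^{-1/2} \, \ubity^{-1/2} \, \spec^{\transp} \, \divty^{-1/2}
  = \divty^{-1/2} \, (\spec \, \ubity^{-1} \, \spec^{\transp}) \, \divty^{-1/2}
  = \divty^{-1/2} \, \csim \, \divty^{-1/2},
\]
so that $\divty^{-1/2} \, \csim \, \divty^{-1/2} \, \nth{\csv}{i} = \eli{\svals}{i}^2 \, \nth{\csv}{i}$. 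Setting the candidate $\nth{\cclusts}{i} = \divty^{-1/2} \, \nth{\csv}{i}$ and left-multiplying this relation by $\divty^{-1/2}$ gives
\[
  \divty^{-1} \, \csim \, \nth{\cclusts}{i}
  = \divty^{-1/2} \, \big( \divty^{-1/2} \, \csim \, \divty^{-1/2} \, \nth{\csv}{i} \big)
  = \eli{\svals}{i}^2 \, \divty^{-1/2} \, \nth{\csv}{i}
  = \eli{\svals}{i}^2 \, \nth{\cclusts}{i}.
\]
Hence $\divty^{-1/2} \, \nth{\csv}{i}$ is an eigenvector of $\cproj$ with eigenvalue $\eli{\svals}{i}^2$; equivalently, $\cproj = \divty^{-1/2} \, (\sym{\specs} \, \sym{\specs}^{\transp}) \, \divty^{1/2}$, so $\cproj$ and $\sym{\specs} \, \sym{\specs}^{\transp}$ are similar and share the spectrum $\{\eli{\svals}{i}^2\}$, with eigenvectors related by $\divty^{-1/2}$.

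The genuinely delicate step, and the one I expect to be the main obstacle, is the bookkeeping on the ordering: I must verify that the \emph{second} eigenvector of $\cproj$ corresponds to the \emph{second} left singular vector, not merely some eigenvector. Since conjugation by $\divty^{1/2}$ preserves eigenvalues and $x \mapsto x^2$ is monotone on $[0,\infty)$, ordering the singular values as $\eli{\svals}{1} \ge \eli{\svals}{2} \ge \cdots$ induces the identical ordering $\eli{\svals}{1}^2 \ge \eli{\svals}{2}^2 \ge \cdots$ of the eigenvalues of $\cproj$. It remains to confirm that the leading index is the shared trivial mode in both problems: because $\cproj$ is row-stochastic (a direct check gives $\cproj \, \mathbf{1} = \mathbf{1}$, using $\ubity^{-1} \spec^{\transp} \mathbf{1} = \mathbf{1}$ and $\divty^{-1} \spec \mathbf{1} = \mathbf{1}$), its Perron eigenvalue is $1$ with constant eigenvector, forcing $\eli{\svals}{1} = 1$ and $\nth{\csv}{1} \propto \divty^{1/2} \, \mathbf{1}$, i.e.\ $\divty^{-1/2} \, \nth{\csv}{1} \propto \mathbf{1}$. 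Discarding this common trivial mode, the second singular vector maps exactly to the second eigenvector, establishing $\nth{\cclusts}{2} = \divty^{-1/2} \, \nth{\csv}{2}$ as claimed. The algebra is routine; the care lies entirely in this index alignment and in ruling out eigenvalue-degeneracy ambiguities at the top of the spectrum.
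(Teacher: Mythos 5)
Your proof is correct and follows essentially the same route as the paper's: compute $\sym{\specs}\,\sym{\specs}^{\transp} = \divty^{-1/2}\,\csim\,\divty^{-1/2}$, then conjugate by $\divty^{-1/2}$ to identify the eigenvectors of $\cproj$ with $\divty^{-1/2}\,\nth{\csv}{i}$. Your added verification of the index alignment --- that the Perron mode $\eli{\svals}{1}=1$ with $\divty^{-1/2}\,\nth{\csv}{1}\propto\mathbf{1}$ is the shared trivial eigenvector, so the second singular vector maps to the Fiedler eigenvector --- is a point the paper leaves implicit, and is a welcome addition.
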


\begin{proof}[Proof]
It is well known that for any matrix $\mtx{A}$, their left singular vectors correspond to the eigenvectors of $\mtx{A} \, \mtx{A}^{\transp}$, with eigenvalues equal to the square of the corresponding singular value.
Therefore, the left singular vectors $\csv$ of $\sym{\specs} = \divty^{-1/2} \, \spec \, \ubity^{-1/2}$ correspond to the eigenvectors of $\sym{\specs} \, \sym{\specs}^{\transp}$.

\begin{align*}
    \sym{\specs} \sym{\specs}^{\transp}
    &= \divty^{-1/2} \, \spec \, \ubity^{-1/2} (\divty^{-1/2} \, \spec \, \ubity^{-1/2})^{\transp} \\
    &= \divty^{-1/2} \, \spec \, \ubity^{-1/2}  (\ubity^{-1/2})^{\transp} \, \spec^{\transp} \, (\divty^{-1/2})^{\transp} \\
    &= \divty^{-1/2} \, \spec \, \ubity^{-1/2}  \ubity^{-1/2} \, \spec^{\transp} \, \divty^{-1/2} \\
    &= \divty^{-1/2} \, \spec \, \ubity^{-1} \, \spec^{\transp} \, \divty^{-1/2} \\
    &= \divty^{-1/2} \csim \divty^{-1/2}
\end{align*}
where $\csim = \spec \, \ubity^{-1} \, \spec^{\transp}$. Therefore, the left singular vectors $\csv$ of the matrix $\sym{\spec}$ are equal to the eigenvectors of $\divty^{-1/2} \, \csim \, \divty^{-1/2}$.

In other words, we can write
\begin{equation*}
(\divty^{-1/2} \, \csim \, \divty^{-1/2}) \, \csv = (1-\eigvals) \, \csv
\end{equation*}

Multiplying by $\divty^{-1/2}$ on both the sides of the equation:
\begin{align*}
  & \divty^{-1/2} \, (\divty^{-1/2} \, \csim \, \divty^{-1/2}) \, \csv = (1-\eigvals) \, \divty^{-1/2} \, \csv \\
  & \divty^{-1} \, \csim \, (\divty^{-1/2} \, \csv) = (1-\eigvals) \, (\divty^{-1/2} \, \csv) \\
  & \cproj \, (\divty^{-1/2} \, \csv) = (1-\eigvals) \, (\divty^{-1/2} \, \csv) \\
  & \cproj \, \cclust = (1-\eigvals) \, \cclust
\end{align*}

The eigenvectors $\cclust$ of $\cproj$ are equal to $\divty^{-1/2} \, \csv$, where $\csv$ are the left singular vectors of the matrix $\sym{\specs}$. Therefore, being the (non-standardized) ECI defined as the Fiedler eigenvector of $\cproj$, $\nth{\ceigs}{2}  = \divty^{-1/2} \, \nth{\csv}{2}$.
\end{proof}

\begin{proposition} \label{prop:pci}
  Given a matrix $\spec$, its (non-standardized) $\pcis$ is $\nth{\peigs}{2} = \nth{\svals}{2}^{-1} \, \ubity^{-1/2} \, \nth{\psv}{2}$ where $\nth{\psv}{2}$ is the second right singular vector of the matrix $\sym{\specs} = \divty^{-1/2} \, \spec \, \ubity^{-1/2}$ and $\nth{\svals}{2}$ the corresponding singular value.
\end{proposition}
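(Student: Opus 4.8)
The plan is to run the argument of Proposition~\ref{prop:eci} on the \emph{right} singular vectors and then to add one ingredient, absent in the ECI case, that fixes the scale. First I would use that the right singular vectors $\psv$ of $\sym{\specs}$ are the eigenvectors of $\sym{\specs}^{\transp} \, \sym{\specs}$, with eigenvalue the square of the corresponding singular value. A short computation gives $\sym{\specs}^{\transp} \, \sym{\specs} = \ubity^{-1/2} \, \psim \, \ubity^{-1/2}$ with $\psim = \spec^{\transp} \, \divty^{-1} \, \spec$, so the eigen-equation reads $(\ubity^{-1/2} \, \psim \, \ubity^{-1/2}) \, \psv = (1 - \eigvals) \, \psv$. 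Left-multiplying by $\ubity^{-1/2}$, exactly as in Proposition~\ref{prop:eci}, collapses this to $\pproj \, (\ubity^{-1/2} \, \psv) = (1 - \eigvals) \, (\ubity^{-1/2} \, \psv)$, so that $\ubity^{-1/2} \, \nth{\psv}{2}$ is the Fiedler eigenvector of $\pproj$. This shows that the (non-standardized) PCI is \emph{proportional} to $\ubity^{-1/2} \, \nth{\psv}{2}$; it does not yet produce the factor $\nth{\svals}{2}^{-1}$.

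The remaining step---the genuinely new one---is to pin down the proportionality constant, which the eigenproblem for $\pproj$ cannot do on its own, since eigenvectors are defined only up to scale. The distinguished scale comes from the coupling that the Method of Reflections imposes between the two indices, namely the averaging relation $\nth{\ceigs}{2} = \divty^{-1} \, \spec \, \nth{\peigs}{2}$ (a country's ECI equals the mean PCI of the products it competitively exports). To use it I would convert between left and right singular vectors through the SVD cross-relation $\sym{\specs} \, \nth{\psv}{2} = \nth{\svals}{2} \, \nth{\csv}{2}$, which rearranges to $\spec \, \ubity^{-1/2} \, \nth{\psv}{2} = \nth{\svals}{2} \, \divty^{1/2} \, \nth{\csv}{2}$.

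Writing the candidate as $\nth{\peigs}{2} = c \, \ubity^{-1/2} \, \nth{\psv}{2}$ and substituting into the averaging relation, the cross-relation together with Proposition~\ref{prop:eci} ($\nth{\ceigs}{2} = \divty^{-1/2} \, \nth{\csv}{2}$) gives
\[
\divty^{-1} \, \spec \, \nth{\peigs}{2} = c \, \divty^{-1} \, \nth{\svals}{2} \, \divty^{1/2} \, \nth{\csv}{2} = c \, \nth{\svals}{2} \, \divty^{-1/2} \, \nth{\csv}{2} = c \, \nth{\svals}{2} \, \nth{\ceigs}{2}.
\]
Forcing the left-hand side to equal $\nth{\ceigs}{2}$ yields $c \, \nth{\svals}{2} = 1$, i.e.\ $c = \nth{\svals}{2}^{-1}$, which is precisely the claimed normalization and finishes the argument.

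I expect the main obstacle to be conceptual rather than computational: one must recognize that the eigen-decomposition of $\pproj$ determines PCI only up to a scalar, and that the specific value $\nth{\svals}{2}^{-1}$ is not arbitrary but is forced by the coupled ECI--PCI definition. Everything else---identifying $\ubity^{-1/2} \, \nth{\psv}{2}$ as an eigenvector and passing between left and right singular vectors via $\sym{\specs} \, \nth{\psv}{2} = \nth{\svals}{2} \, \nth{\csv}{2}$---is routine once this point is in place. A mild additional care is needed on signs, since singular vectors are fixed only up to an overall sign; this is harmless here because the same sign convention is shared by the cross-relation and by Proposition~\ref{prop:eci}.
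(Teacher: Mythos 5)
Your proposal is correct, and its computational core---identifying $\ubity^{-1/2}\,\nth{\psv}{2}$ as the Fiedler eigenvector of $\pproj$ via the eigen-decomposition of $\sym{\specs}^{\transp}\,\sym{\specs} = \ubity^{-1/2}\,\psim\,\ubity^{-1/2}$---is exactly the paper's argument. The one place you genuinely diverge is the factor $\nth{\svals}{2}^{-1}$: the paper disposes of it by fiat, stating that the non-standardized PCI is ``defined as the Fiedler eigenvector of $\pproj$ scaled by $\nth{\svals}{2}^{-1}$,'' so nothing remains to prove once the eigenvector is identified. You instead derive the scale from the Method-of-Reflections coupling $\nth{\ceigs}{2} = \divty^{-1}\,\spec\,\nth{\peigs}{2}$ together with the SVD cross-relation $\sym{\specs}\,\nth{\psv}{2} = \nth{\svals}{2}\,\nth{\csv}{2}$, and your algebra forcing $c\,\nth{\svals}{2}=1$ checks out. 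This buys an actual explanation of where the otherwise unmotivated $\nth{\svals}{2}^{-1}$ comes from; the paper recovers the same insight only afterwards, in its proposition on the interrelation of the eigenvectors, where the coupled system $\divty^{-1}\,\spec\,\nth{\peig}{i} = \nth{\ceig}{i}$ and $\ubity^{-1}\,\spec^{\transp}\,\nth{\ceig}{i} = \nth{\svals}{i}^{2}\,\nth{\peig}{i}$ is shown to hold with precisely this normalization. The only caveat is logical bookkeeping: if the averaging relation is taken as the \emph{definition} that fixes the PCI's scale, as you do, then it cannot also be presented as a downstream theorem, so in the paper's ordering one of the two statements must be demoted to a consistency check. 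Your closing remark about the sign ambiguity of singular vectors is apt and applies equally to the paper's version.
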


\begin{proof}[Proof]
It is well known that for any matrix $\mtx{A}$, their right singular vectors correspond to the eigenvectors of $\mtx{A}^{\transp} \, \mtx{A}$.
Therefore, the right singular vectors $\nth{\psv}{i}$ of $\sym{\specs} = \divty^{-1/2} \, \spec \, \ubity^{-1/2}$ correspond to the eigenvectors of $\sym{\specs}^{\transp} \, \sym{\specs}$.

\begin{align*}
  \sym{\specs}^{\transp} \, \sym{\specs}
  &= (\divty^{-1/2} \, \spec \, \ubity^{-1/2})^{\transp} \, (\divty^{-1/2} \, \spec \, \ubity^{-1/2}) \\
  &= (\ubity^{-1/2})^{\transp} \, \spec^{\transp} \, (\divty^{-1/2})^{\transp} \, (\divty^{-1/2} \, \spec \, \ubity^{-1/2}) \\
  &= \ubity^{-1/2} \, \spec^{\transp} \, \divty^{-1/2} \, \divty^{-1/2} \, \spec \, \ubity^{-1/2} \\
  &= \ubity^{-1/2} \, \spec^{\transp} \, \divty^{-1} \, \spec \, \ubity^{-1/2} \\
  &= \ubity^{-1/2} \, \psim \, \ubity^{-1/2}
\end{align*}

where $\psim = \spec^{\transp} \, \divty^{-1} \, \spec$.
Therefore, the right singular vector $\psv$ of the matrix $\sym{\specs}$ are equal to the eigenvectors of $\ubity^{-1/2} \, \psim \, \ubity^{-1/2}$.

In other words, we can write
\begin{equation*}
  (\ubity^{-1/2} \, \psim \, \ubity^{-1/2}) \psv = (1-\eigvals) \, \psv
\end{equation*}

Multiplying by $\ubity^{-1/2}$ on both sides of the equation:
\begin{align*}
  & \ubity^{-1/2} \, (\ubity^{-1/2} \, \psim \, \ubity^{-1/2}) \, \psv = (1-\eigvals) \, \ubity^{-1/2} \, \psv \\
  & \ubity^{-1} \, \psim \, (\ubity ^{-1/2} \, \psv) = (1-\eigvals) \, (\ubity^{-1/2} \, \psv) \\
  & \pproj \, (\ubity^{-1/2} \, \psv) = (1-\eigvals) \, (\ubity^{-1/2} \, \psv) \\
  & \pproj \, \pclust = (1-\eigvals) \, \pclust
\end{align*}

The eigenvectors $\pclust$ of $\pproj$ are equal to $\ubity^{-1/2} \, \psv$, where $\psv$ are the right singular vectors of the matrix $\sym{\specs}$. Therefore, being the (non-standardized) PCI defined as the Fiedler eigenvector of $\pproj$ scaled by $\nth{\svals}{2}^{-1}$, $\nth{\peigs}{2} = \nth{\svals}{2}^{-1} \, \ubity^{-1/2} \, \nth{\psv}{2}$
\end{proof}

In conclusion, the \ecipci{} results from the singular vectors of $\sym{\spec}$ rescaled according to $\divty^{-1/2}$ and $\nth{\svals}{2}^{-1} \, \ubity^{-1/2}$, respectively.

\section{Transition equations}

\subsection{Interrelation between non-standardized ECI and PCI}
As mentioned, the MoR implies that the ECI of a country corresponds to the average PCI of the products that this country competitively produces.\cite{HidalgoHausmannBuilding2009} Indeed, as already shown in the literature,\cite{HillCorrespondence1974, MealyEtAlInterpreting2019} we can state the following proposition that can be proven both in clustering and co-clustering frameworks.

\begin{proposition} \label{prop:eigens_relation}
  The eigenvectors of $\cproj$ are equal to the \textcolor{blue}{weighted} average of the eigenvectors of $\pproj$. That is, $\nth{\ceigs}{i} \equiv \divty^{-1} \, \spec \, \nth{\peigs}{i}$.
\end{proposition}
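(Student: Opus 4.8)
The plan is to prove the identity directly, combining the singular-value coupling that underlies $\sym{\specs}$ with the explicit forms of the ECI and PCI already established in Propositions~\ref{prop:eci} and~\ref{prop:pci}. The single algebraic fact the whole argument rests on is the defining relation between the $i$-th left and right singular vectors of $\sym{\specs} = \divty^{-1/2} \, \spec \, \ubity^{-1/2}$, namely $\sym{\specs} \, \nth{\psv}{i} = \nth{\svals}{i} \, \nth{\csv}{i}$, which written out and multiplied on the left by $\divty^{1/2}$ becomes
\begin{equation*}
  \spec \, \ubity^{-1/2} \, \nth{\psv}{i} = \nth{\svals}{i} \, \divty^{1/2} \, \nth{\csv}{i}.
\end{equation*}

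First I would substitute the PCI form from Proposition~\ref{prop:pci}, $\nth{\peigs}{i} = \nth{\svals}{i}^{-1} \, \ubity^{-1/2} \, \nth{\psv}{i}$, into the right-hand side $\divty^{-1} \, \spec \, \nth{\peigs}{i}$ of the claim, giving $\divty^{-1} \, \spec \, \nth{\peigs}{i} = \nth{\svals}{i}^{-1} \, \divty^{-1} \, \bigl( \spec \, \ubity^{-1/2} \, \nth{\psv}{i} \bigr)$. Inserting the displayed SVD relation for the bracketed factor turns this into $\nth{\svals}{i}^{-1} \, \divty^{-1} \, \nth{\svals}{i} \, \divty^{1/2} \, \nth{\csv}{i}$; the singular values cancel and $\divty^{-1} \, \divty^{1/2} = \divty^{-1/2}$, leaving $\divty^{-1/2} \, \nth{\csv}{i}$. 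By Proposition~\ref{prop:eci} this is exactly $\nth{\ceigs}{i}$, which closes the argument.

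The step I expect to matter most is the bookkeeping of the scalar constant, since this is where the claimed \emph{exact} equality (rather than mere proportionality) is won or lost. A purely spectral sanity check shows that $\divty^{-1} \, \spec \, \nth{\peigs}{i}$ is an eigenvector of $\cproj$ with eigenvalue $1 - \eigvals_i$, because
\begin{equation*}
  \cproj \, \divty^{-1} \, \spec \, \nth{\peigs}{i} = \divty^{-1} \, \spec \, \ubity^{-1} \, \spec^{\transp} \, \divty^{-1} \, \spec \, \nth{\peigs}{i} = \divty^{-1} \, \spec \, \pproj \, \nth{\peigs}{i} = (1-\eigvals_i) \, \divty^{-1} \, \spec \, \nth{\peigs}{i},
\end{equation*}
using the intertwining $\cproj \, \divty^{-1} \, \spec = \divty^{-1} \, \spec \, \pproj$ that follows from expanding the definitions. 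This pins down $\divty^{-1} \, \spec \, \nth{\peigs}{i}$ only \emph{up to scale} (whenever $1-\eigvals_i$ is simple); it is precisely the factor $\nth{\svals}{i}^{-1}$ baked into the PCI normalization of Proposition~\ref{prop:pci} that fixes the constant to one, and routing the proof through the SVD as above makes this automatic. The mirror computation, read with rows and columns exchanged, recovers the dual relation $\nth{\peigs}{i} = \nth{\svals}{i}^{-2} \, \ubity^{-1} \, \spec^{\transp} \, \nth{\ceigs}{i}$ quoted after the proposition, confirming that the asymmetry between the two transition equations is entirely an artefact of the $\nth{\svals}{i}$-dependent scaling of the PCI.
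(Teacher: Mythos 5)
Your main argument is essentially the paper's own ``co-clustering context'' proof: both rest on the SVD coupling $\sym{\specs}\,\nth{\psv}{i} = \nth{\svals}{i}\,\nth{\csv}{i}$ together with the identifications $\nth{\ceigs}{i} = \divty^{-1/2}\nth{\csv}{i}$ and $\nth{\peigs}{i} = \nth{\svals}{i}^{-1}\ubity^{-1/2}\nth{\psv}{i}$ from Propositions~\ref{prop:eci} and~\ref{prop:pci}, and you simply substitute into the right-hand side of the claim rather than into the SVD system. Your spectral ``sanity check'' via the intertwining $\cproj\,\divty^{-1}\spec = \divty^{-1}\spec\,\pproj$ is a cleaner rendering of the paper's clustering-context proof (which formally multiplies by $\spec^{-1}$ even though $\spec$ is rectangular), but it adds no genuinely different route.
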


\begin{proof}[Proof in a clustering context]
  Remember that $\cproj = \divty^{-1} \, \spec \, \ubity^{-1} \, \spec^{\transp}$ and $\pproj = \ubity^{-1} \, \spec^{\transp} \, \divty^{-1} \, \spec$. Moreover, remember that we defined $\nth{\ceigs}{i}$ and $\nth{\peigs}{i}$ as solutions of the following eigenvalue problems: $\cproj \, \ceig = (1 - \eigvals) \, \ceig$ and $\pproj \, \peig = (1 - \eigvals) \, \peig$.
  \begin{align*}
    \cproj \, \nth{\ceigs}{i} &= (1 - \nth{\eigvals}{i}) \, \nth{\ceigs}{i}, \\
    (\spec^{-1} \, \divty) \, \cproj \, \nth{\ceigs}{i} &= (1 - \nth{\eigvals}{i}) \, (\spec^{-1} \, \divty) \, \nth{\ceigs}{i}, \\
    (\spec^{-1} \, \divty) \, (\divty^{-1} \, \spec \, \ubity^{-1} \, \spec^{\transp}) \, \nth{\ceigs}{i} &= (1 - \nth{\eigvals}{i}) \, (\spec^{-1} \, \divty) \, \nth{\ceigs}{i}, \\
    (\ubity^{-1} \, \spec^{\transp}) \, \nth{\ceigs}{i} &= (1 - \nth{\eigvals}{i}) \, (\spec^{-1} \, \divty) \, \nth{\ceigs}{i}, \\
    (\ubity^{-1} \, \spec^{\transp}) \, (\divty^{-1} \, \spec \, \nth{\peigs}{i}) \, &= (1 - \nth{\eigvals}{i}) \, (\spec^{-1} \, \divty) \, (\divty^{-1} \, \spec \, \nth{\peigs}{i}), \\
    \pproj \, \nth{\peigs}{i} &= (1 - \nth{\eigvals}{i}) \, \nth{\peigs}{i}.
  \end{align*}
  That is, both the stated problems are solved by the same eigenvalues if $\nth{\ceigs}{i} \equiv \divty^{-1} \, \spec \nth{\peigs}{i}$.
\end{proof}

\begin{proof}[Proof in a co-clustering context]
  The SVD of $\sym{\specs}$ consists of solving the following system of equations:
  \begin{equation*}
    \begin{cases}
      \big( \divty^{-1/2} \, \spec \, \ubity^{-1/2} \big) \, \psv &= \svals \, \csv, \\
      \big( \ubity^{-1/2} \, \spec^{\transp} \, \divty^{-1/2} \big) \, \csv &= \svals \, \psv.
    \end{cases}
  \end{equation*}
  Since we proved that $\nth{\ceigs}{i} = \divty^{-1/2} \, \nth{\csv}{i}$ and $\nth{\peigs}{i} = \nth{\svals}{i}^{-1} \, \ubity^{-1/2} \, \nth{\psv}{i}$, we can substitute them into this system of equations.
  \begin{equation*}
    \begin{cases}
      \big( \divty^{-1/2} \, \spec \, \ubity^{-1/2} \big) \, \big( \nth{\svals}{i} \, \ubity^{1/2} \, \nth{\peig}{i} \big) &= \nth{\svals}{i} \, \big( \divty^{1/2} \, \nth{\ceig}{i} \big), \\
      \big( \ubity^{-1/2} \, \spec^{\transp} \, \divty^{-1/2} \big) \, \big( \divty^{1/2} \, \nth{\ceig}{i} \big) &= \nth{\svals}{i} \, \big( \nth{\svals}{i} \, \ubity^{1/2} \, \nth{\peig}{i} \big).
    \end{cases}
  \end{equation*}
  \begin{equation} \label{eq:ecipci_relationship}
    \begin{cases}
      \divty^{-1} \, \spec \, \nth{\peig}{i} &= \nth{\ceig}{i}, \\
      \ubity^{-1} \, \spec^{\transp} \nth{\ceig}{i} &= \nth{\svals}{i}^{2} \, \nth{\peig}{i}.
    \end{cases}
  \end{equation}
\end{proof}

This proof is equivalent to the previous one, but the simultaneity of the two solutions is now more evident. Further, it shows that both equivalences in equation (\ref{eq:ecipci_relationship}) must simultaneously hold if we want to solve $\cproj \, \ceig = (1 - \eigvals) \, \ceig$ and $\pproj \, \peig = (1 - \eigvals) \, \peig$ at the same time, remembering, from the correspondence between the SVD of $\sym{\specs}$ and the eigen-decomposition of $\sym{\specs} \, \sym{\specs}^{\transp}$, that $\svals^2 = 1 - \lambda$.

\subsection{Interrelation between ECI and PCI}
However, the ECI is usually defined as the \emph{standardized} second eigenvalue of $\cproj$.\cite{HausmannEtAlAtlas2014} Therefore, to preserve the equivalence between ECI and ``average PCI'',\cite{HidalgoHausmannBuilding2009} the PCI is usually defined as the second eigenvalue of $\pproj$ standardized using the same mean and variance used to define the ECI.\cite{CIDEcomplexity2014} It follows another proposition.
\begin{proposition} \label{prop:ecipci_relation}
  We define $\ecis \coloneq \big( \nth{\ceigs}{2} - \mean{\nth{\ceigs}{2}} \big) \big/ \sd{\nth{\ceigs}{2}}$ and $\pcis \coloneq \big( \nth{\peigs}{2} - \mean{\nth{\ceigs}{2}} \big) \big/ \sd{\nth{\ceigs}{2}}$, where $\mean{\vect{\cdot}}$ and $\sd{\vect{\cdot}}$ are the mean and standard deviation of the corresponding vector, so that the ECI of a country is equal to the average PCI of the products it produces competitively. That is, $\ecis \equiv \divty^{-1} \, \spec \cdot \pcis$.
\end{proposition}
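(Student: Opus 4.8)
The plan is to bootstrap from the non-standardized identity already established in Proposition~\ref{prop:eigens_relation}, namely $\nth{\ceigs}{2} = \divty^{-1} \, \spec \, \nth{\peigs}{2}$, and to show that the affine standardization used to pass from the non-standardized to the standardized indices commutes with the linear operator $\divty^{-1} \, \spec$. Writing $\mu = \mean{\nth{\ceigs}{2}}$ and $s = \sd{\nth{\ceigs}{2}}$ for the common location and scale, the definitions give $\ecis = (\nth{\ceigs}{2} - \mu \, \mathbf{1})/s$ and $\pcis = (\nth{\peigs}{2} - \mu \, \mathbf{1})/s$, where $\mathbf{1}$ denotes the all-ones vector. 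The goal is then to verify $\divty^{-1} \, \spec \cdot \pcis = \ecis$ by direct substitution.

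The key observation that makes this work is that $\divty^{-1} \, \spec$ is \emph{row-stochastic}. Indeed, the diagonal matrix $\divty$ collects exactly the row sums of $\spec$, so $\spec \, \mathbf{1} = \divty \, \mathbf{1}$, and hence $\divty^{-1} \, \spec \, \mathbf{1} = \mathbf{1}$; the operator fixes the constant vector. With this in hand, expanding the left-hand side gives
\begin{equation*}
  \divty^{-1} \, \spec \cdot \pcis = \frac{1}{s} \Big( \divty^{-1} \, \spec \, \nth{\peigs}{2} - \mu \, \divty^{-1} \, \spec \, \mathbf{1} \Big),
\end{equation*}
and substituting $\divty^{-1} \, \spec \, \nth{\peigs}{2} = \nth{\ceigs}{2}$ from Proposition~\ref{prop:eigens_relation} together with $\divty^{-1} \, \spec \, \mathbf{1} = \mathbf{1}$ collapses this to $(\nth{\ceigs}{2} - \mu \, \mathbf{1})/s = \ecis$, as desired.

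The step that deserves emphasis---and which explains why the proposition is stated with this particular standardization convention---is the requirement that both indices be centered and scaled by the \emph{same} $\mu$ and $s$ (those of the ECI). Because $\divty^{-1} \, \spec$ fixes $\mathbf{1}$, subtracting the common constant $\mu$ passes through the averaging operator unchanged; had the PCI been standardized by its own mean and variance, the constant terms would not have matched and the clean identity would fail. I do not expect a genuine obstacle here: once row-stochasticity is recorded, the argument is a one-line linear computation. The only point requiring care is bookkeeping the all-ones vector correctly, so that the scalar mean $\mu$ is interpreted as $\mu \, \mathbf{1}$ when it enters the matrix-vector product.
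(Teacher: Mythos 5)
Your proposal is correct and follows essentially the same route as the paper's own proof: expand the standardized definition, substitute the non-standardized identity $\nth{\ceigs}{2} = \divty^{-1}\,\spec\,\nth{\peigs}{2}$, and use the fact that $\divty^{-1}\,\spec$ maps the constant vector $\mean{\nth{\ceigs}{2}}\,\mathbf{1}$ to itself (the paper verifies this componentwise; you name it as row-stochasticity, which is the same observation). Nothing is missing.
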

\begin{proof}
  We proved that $\nth{\ceigs}{2} \equiv \divty^{-1} \, \spec \, \nth{\peigs}{2}$. Therefore,
  \begin{align*}
    \ecis &\equiv \divty^{-1} \, \spec \cdot \pcis, \\
               &= \divty^{-1} \, \spec \, \big( \nth{\peigs}{2} - \mean{\nth{\ceigs}{2}} \big) \big/ \sd{\nth{\ceigs}{2}}, \\
               &= \big( \divty^{-1} \, \spec \, \nth{\peigs}{2} - \divty^{-1} \, \spec \, \mean{\nth{\ceigs}{2}} \big) \big/ \sd{\nth{\ceigs}{2}}, \\
               &= \big( \nth{\ceigs}{2} - \mean{\nth{\ceigs}{2}} \big) \big/ \sd{\nth{\ceigs}{2}}.
  \end{align*}
  Please, notice that $\divty^{-1} \, \spec \, \mean{\nth{\ceigs}{2}} = \mean{\nth{\ceigs}{2}}$ because $\frac{\sum_p \elij{\specs}{cp} \, \mean{\nth{\ceigs}{2}}}{\sum_p \elij{\specs}{cp}} = \mean{\nth{\ceigs}{2}} \;\; \forall \, c$.
\end{proof}

Yet, since $\pcis \not\equiv \nth{\svals}{i}^{-2} \, \ubity^{-1} \, \spec^{\transp} \, \ecis$, this breaks equation~(\ref{eq:ecipci_relationship}), restating that the ECI-PCI framework has been focusing on countries much more than on products since its very beginning.

\section*{Supplementary Tables}

\section{Main countries and products in low- and high-complexity co-clusters}

\begin{table}[hp]
  \centering
  \begin{tabular}{l|>{\centering\arraybackslash}p{1.5cm}>{\centering\arraybackslash}p{1.8cm}|>{\centering\arraybackslash}p{1.5cm}>{\centering\arraybackslash}p{1.8cm}|>{\centering\arraybackslash}p{1.5cm}>{\centering\arraybackslash}p{1.8cm}|>{\centering\arraybackslash}p{1.5cm}>{\centering\arraybackslash}p{1.8cm}}
  \toprule
    & \multicolumn{2}{c}{1970} & \multicolumn{2}{c}{1985} & \multicolumn{2}{c}{2000} & \multicolumn{2}{c}{2015} \\
  \midrule
    & Country & Product & Country & Product & Country & Product & Country & Product \\
  \midrule
    $\mathcal{A}$ & Sudan, Libya, Nigeria, Uganda, Indonesia & Sawlogs, Palm oil, Nat.~rubber, Palm nuts, Castor oil & Nigeria, Brunei, Cameroon, Congo, Gabon & Sawlogs, Cocoa, Coffee, \quad Nat.~rubber, Palm nuts & Iraq, Eq.~Guinea, Nigeria, G.-Bissau, Brunei & Min.~fuels, Petroleum, Cocoa, Nat.~rubber, Castor Oil & Iraq, Eq.~Guinea, G.-Bissau, Gabon, Nigeria & Cocoa, Petroleum, Nat.~rubber, Tin ores, Sesame \\
  \midrule
    $\mathcal{B}$ & Germany, Switz., Sweden, GB, Austria & Specialized machines, Elec.~motor, Dish-wash.~machines & Japan, Germany, Sweden, Switz., GB & Television, Opt.~instr., Music instr., Ind.~ovens, Tractors  & Japan, Germany, Switz., Sweden, GB & Railways, Trucks, Car ind., Metal ind., Spec.~machines & Japan, Taiwan, Switz.,\quad S.~Korea, Germany & Opt.~instr., Photogr., Metal ind., Chem.~ind., Spec.~mach.\\ 
  \bottomrule
  \end{tabular}
  \caption{Examples of countries and products in low- and high-complexity co-clusters $\mathcal{A}$ and $\mathcal{B}$ in different years. The table reports only countries and products with the highest probability of belonging to these co-clusters in the years 1970, 1985, 2000, and 2015. Products are classified according to Standard International Trade Classification (SITC) Rev. 2 at the 4-digit level.}
  \label{tab:top_bottom_clusters}
\end{table}

\end{document}